\numberwithin{equation}{section} \numberwithin{theorem}{section}
\def\bt{\begin{thm}}
\def\et{\end{thm}}
\def\bl{\begin{lem}}
\def\el{\end{lem}}
\def\bd{\begin{defi}}
\def\ed{\end{defi}}
\def\bc{\begin{cor}}
\def\ec{\end{cor}}
\def\bp{\begin{proof}}
\def\ep{\end{proof}}
\def\br{\begin{rem}}
\def\er{\end{rem}}
\newtheorem{thm}{Theorem}[section]
\newtheorem{lem}{Lemma}[section]
\newtheorem{defi}{Definition}[section]
\newtheorem{rem}{Remark}[section]
\newtheorem{cor}{Corollary}[section]
\newcommand{\be}{\begin{equation}}
\newcommand{\ee}{\end{equation}}
\newcommand{\ba}{\begin{aligned}}
\newcommand{\ea}{\end{aligned}}
\newlength{\figurewidth}
\newlength{\smallfigurewidth}
\begin{document}
\date{}
\title{Wavelet-Based Compression and Peak Detection Method for the Experimentally Estimation of Microtubules Dynamic Instability Parameters Identified in Three States}

\author{Shantia~Yarahmadian,~Vineetha~Menon,~Majid~Mahrooghy,~and Vahid~A. Rezania
\IEEEcompsocitemizethanks{
\protect
\IEEEcompsocthanksitem Shantia~Yarahmadian is with the Department of Mathematics, Mississippi State University. E-mail: syarahmadian@math.msstate.edu
\IEEEcompsocthanksitem Vineetha~Menon is with the Department of Electrical and Computer Engineering, Mississippi State University. E-mail: vk132@msstate.edu
\IEEEcompsocthanksitem Vahid~Rezania is with the Department of Physical Sciences, Macewan University, Edmonton, Canada. E-mail: rezaniav@macewan.ca
\IEEEcompsocthanksitem Majid~Mahrooghy is with Xoran Technologies. E-mail: majid.mahrooghy@gmail.com
}
\thanks{}}

\maketitle

\begin{abstract}
Recent studies has revealed that Microtubules (MTs) exhibit three transition states of growth, shrinkage and pause. In this paper, we first introduce a three states random evolution model as a framework for studying MTs dynamics in three transition states of growth, pause and shrinkage. Then, we introduce a non-traditional stack run encoding scheme with 5 symbols for detecting transition states as well as to encode MT experimental data. The peak detection is carried out in the wavelet domain to effectively detect these three transition states. One of the added advantages of including peak information while encoding being that it enables to detect the peaks efficiently and encodes them simultaneously in the wavelet domain without having the need to do further processing after the decoding stage. Experimental results show that using this form of non-traditional stack run encoding has better compression and reconstruction performance as opposed to traditional stack run encoding and  run length encoding schemes. Parameters for MTs modeled in the three states are estimated and is shown to closely approximate original MT data for lower compression rates. As the compression rate increases, we may end up throwing away details that are required to detect transition states of MTs. Thus, choosing the right compression rate is a trade-off between admissible level of error in signal reconstruction, its parameter estimation and considerable rate of compression of MT data.

\end{abstract}

\noindent Keywords: Microtubules, Random Evolution, Compression, Wavelets, Trichotomous Markov Noise

\label{sec:intro}
Microtubules (MTs) are natural polymers in the inside of living eukaryotic cells assembled by aggregation of alpha-beta tubulin protein dimers. Under proper conditions, when the tubulin concentration is above a threshold level, tubulin dimers bind to each other and construct a tube-like structure with a typical of $13$ subunits in a cross-section \cite{1}. 
MTs play a major role in providing the structural stability of the cell as well as serving as a semi-rigid structural element in the cell for intracellular distribution networks. MTs participate in chromosome segregation before cell division, neuronal activities, and cell motility \cite{2}. They also play a key role in various diseases such as Alzheimer disease \cite{3} and Parkinson disease \cite{4} and different form of cancers \cite{5}.
\\\\
Formation of MTs within cells is typically initiated from a complex of tubulin proteins as a nucleation template. It was first observed that in time, whether in the living cell ({\it in vivo}) or in the lab with the purified tubulin ({\it in vitro}), MTs display stochastic fluctuations between polymerization and depolymerization states. This process is termed as \textit{dynamic instability} \cite{6}.  More recently, Ambrose and Wasteneys \cite{7} and Shaw et al. \cite{8} observed that MTs are likely to spend some times in a third state, an attenuated or pause state, where they exhibit little or no detectable length changes over time. including this third state of pause, MTs dynamic instability is characterized by the transition frequencies between three states of growth ($g$), pause ($p$) and shrinkage ($s$). The transition frequencies are modulated during certain cellular processes to maintain the length distribution and density of the polymer array.
\\\\
Mechanisms governing dynamic instability are still an active subject of both experimental and theoretical investigations.
The stochastic nature of the transition between polymerization-pause-depolymerization states complicates the construction of deterministic equations that can be numerically solved for properties such as MT length distribution. A closed form kinetic description of microtubule dynamics has not been successfully rendered to date. However, several analytical models have been developed to describe how the major factors leading to dynamic instability (e.g. growth and shortening velocities together with catastrophe and rescue frequencies) will produce a steady state system of polymers under various conditions. See for example \cite{9, 10, 11, 12, 13, 14, 15} and the references cited in them.
\\\\
In order to elucidate the MTs' dynamic instability, they have been subject of extensive experimental studies 
using optical and electronic microscopy techniques. Time-lapse video microscopy analysis demonstrated the 
existence of many stages in the development of a microtubule, including an initial nucleation stage, followed by a more 
or less continuous growth mode that is then stochastically interrupted by sudden and catastrophic disassembly 
that can again be followed by a growth stage. To understand and capture dynamic behavior of MTs, a high temporal resolution observation is necessary.
\\\\
Although with new technological developments one can achieve beyond microsecond frame rates in video microscopy, 
the major limitation on collecting data is still in damaging the specimen or probe over the time course of the experiment.
It is clear that the specimen can only be illuminated for a finite period of time, and as a result, the collected time-lapse data in comparison to the time scale of the intracellular events are often being sparse. Furthermore, there are some substantial practical limitations on the sampling of the MT parameters such as experimental equipment precision and spatiotemporal resolution. Therefore, a use of data analyzing technique is necessary to improve the resolution and the Signal-to-Noise Ratio (SNR) of the time-lapse observations.
\\\\
Several compression methods for various biomedical signals have been proposed in the literature \cite{16} , such as Amplitude Zone-Time Epoch Coding (AZTEC) \cite{17}, Turning point (TP) \cite{16}, Coordinate Reduction Time Encoding System (CORTES) \cite{18}, Wavelet based compression methods such as Set Partitioning in Hierarchical Trees (SPIHT) \cite{19}, Embedded Zerotree Wavelets (EZW) \cite{20}, JPEG2000 \cite{21} are few of the image compression algorithms adopted from the realm of image processing to biomedical signals. All of the wavelet based compression algorithms appear lucrative because of the sparse coefficients in the wavelet domain that  provide desirable qualities such as sparsity, inherent noise and redundancy reduction. 
\\\\
Recently, Mahrooghy et al. \cite{12} discussed the use of compressing sensing (CS) on the sampled data of the microtubule length time series.  The aim was to reconstruct the original signal with relatively low sampling rates. They modeled the data as a Dichotomous Markov Noise (DMN), a two states stochastic model, that considers an MT either grows or shortens.   They implemented the method on experimental data sets where  their results showed that combining CS along with the wavelet transform significantly reduces the recovery errors comparing in the absence of wavelet transform, especially in the low and the medium sampling rates.
\\\\
In this paper we are studying the dynamic instability of microtubules in the three states of stochastic polymerization, pause and depolymerization. For this purpose we have used random evolution velocity model in three states, which is also known as Trichotomous Markov Noise (TMN) \cite{22}. The geometry of the study is a semi-infinite geometry, in which infinitely rigid microtubules grow perpendicularly to a nucleating planar surface along the $x$-axis. In our model infinitely rigid microtubules are growing perpendicular to a nucleation planar surface and randomly switching between three states of growth (polymerization) state ($g$), pause ($p$) and shrinkage (depolymerization) state ($s$). In this framework, we have explored a non traditional stack run coding \cite{23, 24} to which encodes additional peak information during the encoding  process reducing the overhead in the decoding process and helps to reconstruct both the signal and extract the transition features of Microtubules. The reconstructed MT signal is used to estimate the TMN parameters and resultant error in the reconstruction process.

\begin{figure}[h!]
\centering
\includegraphics[scale=.5]{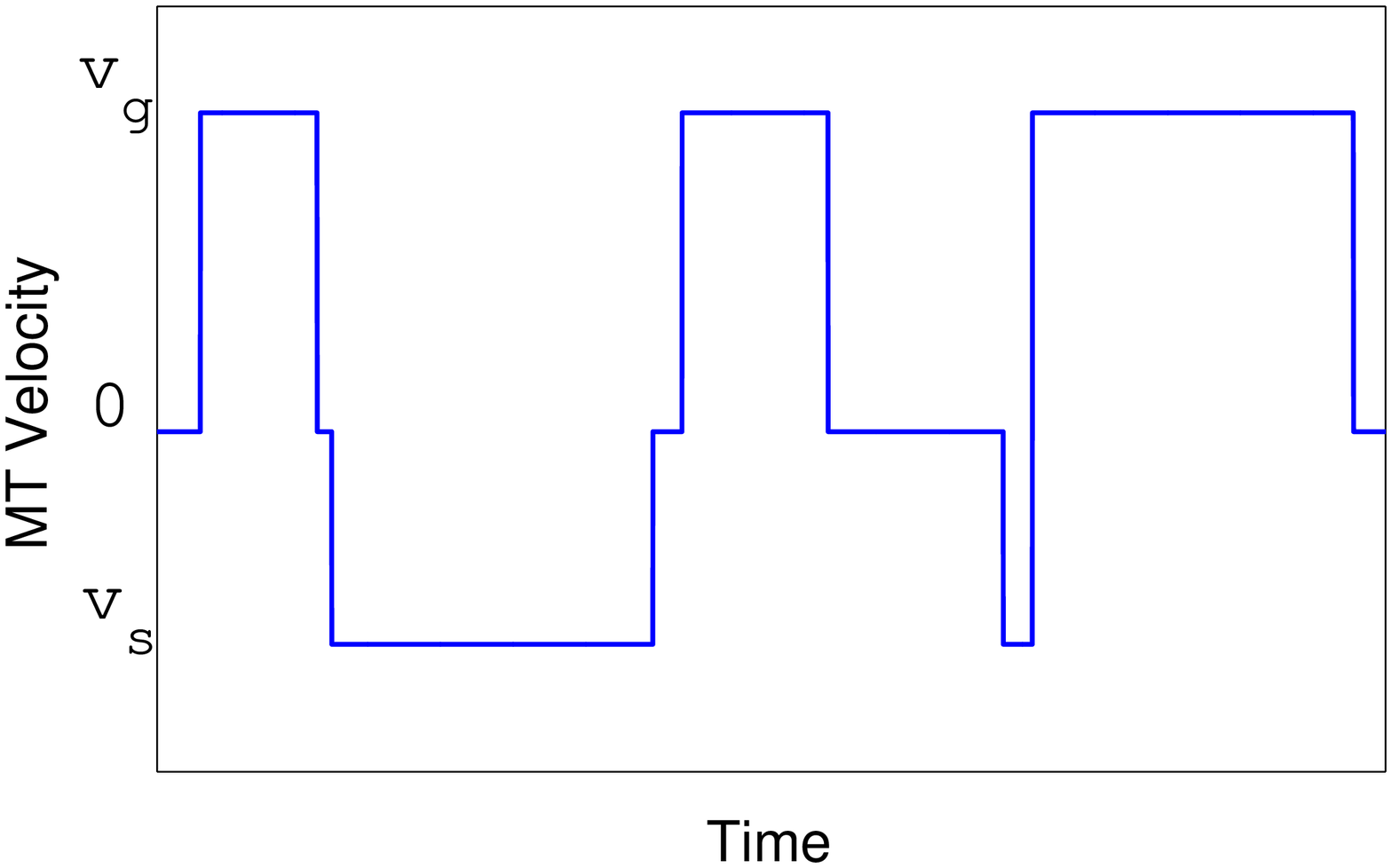}
\label{fig:TMN}
\end{figure}

\section{Trichotomous Markov Noise Model for Microtubules length Time Series}
\label{sec:TMN}
The three-state dynamic instability model involves eight parameters: six transition rates between states of growth, pause and shrinkage represented by $f_i^j$ where $i, j \in \{g, p, s\}$ and growth and shrinkage rates $v_g$ and $v_s$. As a simplification of this three-state model, a two-state dynamical instability model involving four parameters has been studied \cite{} and used extensively in cell biology \cite{13, 14, 15, 25}. The mean length and mean lifetime of a MT in the two-state model depend on a threshold quantity $f_{gs}$, $f_{sg}$, $v_g$, and $v_s$ \cite{25}. If the quantity is positive, the MTs tend to shrink more than they grow, and the MTs will have a finite mean length and mean lifetime. Otherwise, on average, they tend to grow forever. For the three-state case, we compute the equivalent equations as it is described below.
In this case the stochastic velocity is a Trichotomous Markov Noise (TMN) that takes values $v_g$
with probability ${p}_g$, $0$ with probability ${p}_s$
and $v_s$ with probability ${p}_s$. In this case:
\be 
\label{Tri}
\frac{d}{dt}\begin{pmatrix} {p}_g\\{p}_p\\{p}_s\end{pmatrix}=
\begin{pmatrix} -f_{gs}-f_{gp}&f_{pg}&f_{sg}
\\f_{gp}&-f_{pg}-f_{ps}&f_{sp}\\f_{gs} & f_{ps}&-f_{sg}-f_{sp}\end{pmatrix} \begin{pmatrix} {p}_g\\{p}_p\\{p}_s\end{pmatrix}
\ee
\subsection{Notations:}
Throughout this paper we use following notations for simplicity. 
\be \label{F1}
\ba
&F_g=f_{gp}+f_{gs}\\
&F_p=f_{pg}+f_{ps}\\
&F_s=f_{sp}+f_{sg}\\
&F=F_s+F_p+F_g\\
F_{gp}&=f_{gs}f_{pg}+f_{gs}f_{ps}+f_{gp}f_{ps}\\
F_{sp}&=f_{sg}f_{pg}+f_{sg}f_{ps}+f_{sp}f_{pg}\\
F_{sg}&=f_{sp}f_{gs}+f_{sp}f_{gp}+f_{sg}f_{gp}\\
\Omega&=F_{sp}+F_{pg}+F_{sg}
\ea
\ee
\begin{rem}
$f_{ij}, i, j \in \{g, p, s\}$ represent the corresponding transition frequencies per unit time.
\end{rem}

\subsection {Waiting times and Steady States}
\label{sec:wait_time}
\begin{lem}
In steady-state the average velocity reaches:
\be
V=\frac{v_gF_{sp}-v_sF_{pg}}{\Omega}
\ee
\end{lem}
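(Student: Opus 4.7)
The plan is to first compute the steady-state occupation probabilities of the three states by solving $\dot{p}=0$ in the master equation (\ref{Tri}), and then take the probability-weighted average of the three velocity values $v_g,\ 0,\ -v_s$ that the TMN assumes (the shrinkage contribution is signed because depolymerization reduces the microtubule length, so the signed velocity is $-v_s$ with $v_s>0$ being the magnitude).

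For the stationary distribution I would set $\dot{p}_g=\dot{p}_p=0$; the third balance equation is automatically satisfied because the columns of the generator in (\ref{Tri}) sum to zero, so I am left with a two-dimensional linear system in $p_g,p_p,p_s$. Solving $\dot{p}_p=0$ for $p_p$ and substituting into $\dot{p}_g=0$ leads, after clearing $F_p$, to
\[
(F_g F_p - f_{pg}f_{gp})\, p_g \;=\; (f_{pg}f_{sp} + f_{sg}F_p)\, p_s.
\]
A direct expansion using the definitions (\ref{F1}) identifies the left-hand coefficient as $F_{gp}$ and the right-hand coefficient as $F_{sp}$, hence $p_g F_{gp}=p_s F_{sp}$. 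Resubstituting into the expression for $p_p$ gives $p_p F_{gp}=p_s F_{sg}$ once one recognizes the identity $f_{gp}F_{sp}+f_{sp}F_{gp}=F_p F_{sg}$ (both sides expand into the same six cubic monomials in the transition rates). Together this yields $p_g:p_p:p_s=F_{sp}:F_{sg}:F_{gp}$, and the normalization $p_g+p_p+p_s=1$ forces the common denominator to equal $\Omega = F_{sp}+F_{sg}+F_{gp}$.

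It then remains only to evaluate the expected velocity, which in steady state is
\[
V \;=\; v_g\, p_g \,+\, 0\cdot p_p \,-\, v_s\, p_s \;=\; \frac{v_g F_{sp} - v_s F_{gp}}{\Omega},
\]
matching the claim (reading the $F_{pg}$ appearing in the statement as synonymous with the $F_{gp}$ defined in (\ref{F1}), which is the only mixed object defined there). The single step demanding real care is the identification $f_{gp}F_{sp}+f_{sp}F_{gp}=F_p F_{sg}$, since it is the only place where cancellations among cubic monomials in the rates are not immediate from the quadratic definitions; verifying the six-term match is the main bookkeeping obstacle. An arguably cleaner alternative is to invoke the matrix--tree theorem, in which $F_{sp},F_{sg},F_{gp}$ emerge automatically as the weighted sums over in-arborescences rooted at $g,p,s$ respectively, but the elementary elimination above keeps the proof self-contained within the linear algebra already set up.
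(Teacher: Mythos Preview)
Your proof is correct. The paper, however, takes a slightly different route: rather than solving the nullspace equation $Ap=0$ by elementary elimination as you do, it first computes the eigenvalues of the generator in \eqref{Tri}, namely $\lambda_1=0$ and $\lambda_{2,3}=\tfrac{1}{2}\bigl(-F\pm\sqrt{F^2-4\Omega}\bigr)$, writes the general time-dependent solution $p(t)=\sum_i C_i V_i e^{\lambda_i t}$, and then lets $t\to\infty$ so that only the contribution of the zero eigenvector survives (fixed by the normalization $p_g+p_p+p_s=1$). This yields the same stationary probabilities $p_g(\infty)=F_{sp}/\Omega$, $p_p(\infty)=F_{sg}/\Omega$, $p_s(\infty)=F_{gp}/\Omega$, after which $V=\lim_{t\to\infty}\bigl(v_g\,p_g(t)-v_s\,p_s(t)\bigr)$ follows immediately. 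Your direct algebraic elimination is more self-contained and makes the bookkeeping fully explicit (including the cubic identity $f_{gp}F_{sp}+f_{sp}F_{gp}=F_pF_{sg}$ you single out), whereas the paper's spectral approach has the incidental advantage of showing that the steady state is actually attained from any initial condition, since the nonzero eigenvalues have negative real part. Either argument suffices for the lemma as stated, and your remark about the matrix--tree theorem is a nice conceptual gloss that the paper does not mention.
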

\begin{proof}
The frequency matrix
$$
A=\begin{pmatrix} -f_{gs}-f_{gp}&f_{pg}&f_{sg}
\\f_{gp}&-f_{pg}-f_{ps}&f_{sp}\\f_{gs} & f_{ps}&-f_{sg}-f_{sp}\end{pmatrix}
$$
has the following eigenvalues:
\be
\lambda_{1, 2, 3}=0, \frac{-F \pm \sqrt{F^2-4\Omega}}{2}
\ee
\\
Therefore, the solution of \eqref{Tri} is written as:
\be
{p}(t) = \sum_{i=1}^3C_iV_ie^{\lambda_i t}
\ee

where, $\lambda_i$'s and $V_i$'s are the corresponding eigenvalue and eigenvectors. $C_i$'s are constants depending on the transition frequencies and the initial probabilities of growth, pause and shrinkage. By the substitution of the corresponding eigenvectors, and using the fact that ${p}_g(t)+{p}_p(t)+{p}_s(t)=$1, the stationary probabilities are written as:
\medskip
\be
{p}_g(\infty)=\frac{F_{sp}}{F_{sp}+F_{sg}+F_{pg}}
\ee
\medskip
\be
{p}_p(\infty)=\frac{F_{sg}}{F_{sp}+F_{sg}+F_{pg}}
\ee
\medskip
\be
{p}_s(\infty)=\frac{F_{pg}}{F_{sp}+F_{sg}+F_{pg}}
\ee
\\
Now, the instantaneous expected value for the velocity can be calculated as:
\be
\bar{v}(t)=v_g{p_g}(t)-v_s{p_s(t)}
\ee
In the steady-state, this average will reach:
\be
V=\lim_{t \to \infty} \bar{v}(t)=\frac{v_gF_{sp}-v_sF_{pg}}{\Omega}
\ee
\end{proof}
\begin{rem}
In the unlimited regime, $V>0$, the MT continuously grows in time with the average speed $V>0$ and the growth is the dominating event, while in the limited regime, $V<0$, the shrinkage is the dominating event. The transition between the two regimes happen when $V=0$, where there is a balance between growth and shrinkage rates that gives a steady-state length of MT. This means that the growth regime of MT depends on local value of $V$. The MT reaches the steady state when:
\be
v_gF_{sp}=v_sF_{pg}
\ee
\end{rem}

\subsection{Master Equation and Average Length}
\label{sec:Avg_len}
Following \cite{14,15}, the probability densities, $P_g(x, t)$, $P_s(x, t)$, and $P_p(x, t)$ are defined as the probability of finding at time $t$ the MT in the phases, growth, shrinkage and pause, respectively with its free length at $x$. The dynamic instability is written as:

\be 
\label{tmas}
\begin{cases}
\ba
\frac{\partial P_g}{\partial t}&=-v_g\frac{\partial P_g}{\partial x}-(f_{gs}+f_{gp})P_g+f_{pg}P_p+ f_{sg}P_s
\\\\
\frac{\partial P_p}{\partial t}&=f_{gp}P_g-(f_{pg}+f_{ps})P_p+ f_{sp}P_s
\\\\
\frac{\partial P_s}{\partial t}&=v_s\frac{\partial P_s}{\partial x}+f_{gs}P_g+f_{ps}P_p- (f_{sg}+f_{sp})P_s
\ea
\end{cases}
\ee
\\\\
\begin{lem}
The average length distribution of MT is an exponential distribution with the mean of:
\be
<L>=\frac{v_sv_g}{v_s\frac{F_{gp}}{F_p}-v_g\frac{F_{ps}}{F_p}}
\ee
\end{lem}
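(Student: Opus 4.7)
The plan is to analyze the stationary version of the master equation \eqref{tmas}, obtain a reduced two-dimensional first-order ODE system for the growth and shrinkage densities, and extract the decay length from a characteristic-equation argument. Setting $\partial_t P_i=0$ in \eqref{tmas}, the middle (pause) line becomes purely algebraic and yields
\be
P_p(x) \;=\; \frac{f_{gp}P_g(x)+f_{sp}P_s(x)}{F_p},
\ee
which I would substitute into the first and third stationary equations to eliminate $P_p$. This produces a linear first-order ODE system
\be
v_g\,\frac{dP_g}{dx} \;=\; -\Bigl(F_g-\tfrac{f_{gp}f_{pg}}{F_p}\Bigr)P_g+\Bigl(f_{sg}+\tfrac{f_{sp}f_{pg}}{F_p}\Bigr)P_s,\qquad
-v_s\,\frac{dP_s}{dx} \;=\; \Bigl(f_{gs}+\tfrac{f_{gp}f_{ps}}{F_p}\Bigr)P_g-\Bigl(F_s-\tfrac{f_{sp}f_{ps}}{F_p}\Bigr)P_s.
\ee

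Next I would use the definitions in \eqref{F1} to collapse each bracketed coefficient. A direct computation, carefully grouping terms, gives the identifications
\be
F_g-\frac{f_{gp}f_{pg}}{F_p}=\frac{F_{gp}}{F_p},\quad
f_{sg}+\frac{f_{sp}f_{pg}}{F_p}=\frac{F_{sp}}{F_p},\quad
f_{gs}+\frac{f_{gp}f_{ps}}{F_p}=\frac{F_{gp}}{F_p},\quad
F_s-\frac{f_{sp}f_{ps}}{F_p}=\frac{F_{sp}}{F_p}.
\ee
This symmetry is really the key simplification — without it the characteristic equation below would look much messier, so this is the step I expect to be the main obstacle (purely algebraic, but one needs to track six transition rates carefully).

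Now I insert the exponential ansatz $P_g(x)=A\,e^{-x/L}$, $P_s(x)=B\,e^{-x/L}$ (so that the stationary density is integrable on $[0,\infty)$ and vanishes at infinity, which the problem physically requires for a finite-length MT). This reduces the ODE system to a homogeneous linear system in $(A,B)$,
\be
\begin{pmatrix} \dfrac{v_g}{L}-\dfrac{F_{gp}}{F_p} & \dfrac{F_{sp}}{F_p} \\[4pt] \dfrac{F_{gp}}{F_p} & -\dfrac{v_s}{L}-\dfrac{F_{sp}}{F_p} \end{pmatrix}\begin{pmatrix} A\\ B\end{pmatrix}=\begin{pmatrix} 0\\ 0\end{pmatrix}.
\ee
A non-trivial $(A,B)$ exists iff the determinant vanishes; expanding, the $F_{gp}F_{sp}/F_p^{2}$ terms cancel and the remaining equation factors as
\be
\frac{1}{L}\left[\,\frac{v_s F_{gp}-v_g F_{sp}}{F_p}-\frac{v_g v_s}{L}\,\right]=0,
\ee
whose non-trivial root is exactly $L=\dfrac{v_gv_s}{v_s F_{gp}/F_p-v_g F_{sp}/F_p}$.

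Finally, the total length density is $P(x)=P_g(x)+P_p(x)+P_s(x)=\bigl(A+B+(f_{gp}A+f_{sp}B)/F_p\bigr)e^{-x/L}$, which is a single exponential in $x$ with decay scale $L$. Normalizing $\int_0^\infty P(x)\,dx=1$ fixes the overall constant and identifies $L$ with the mean of the exponential distribution, so $\langle L\rangle=L$ and the claimed formula follows (matching the statement, where $F_{ps}$ appears for what the notation list calls $F_{sp}$).
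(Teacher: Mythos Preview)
Your proposal is correct and follows essentially the same route as the paper: set the time derivatives to zero, use the algebraic pause equation to eliminate $P_p$, and arrive at exactly the paper's reduced two-equation system \eqref{tmas1} with coefficients $F_{gp}/F_p$ and $F_{sp}/F_p$. The only cosmetic difference is in the last step: the paper adds the two equations (so the zeroth-order terms cancel), differentiates, and obtains the scalar second-order ODE $P_i''+\lambda P_i'=0$ to read off the decay rate, whereas you insert the exponential ansatz directly into the $2\times2$ first-order system and solve the characteristic determinant; both yield the same $L$, and your observation that the lemma's $F_{ps}$ is the quantity denoted $F_{sp}$ in \eqref{F1} is also correct.
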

\begin{proof}
In steady-state, by setting the time partial derivative equal to zero and substitution of $P_p(t)$, \eqref{tmas} will convert to:
\be
\label {tmas1}
\begin{cases}
\ba
-v_g\frac{\partial P_g}{\partial x}-\Big(\frac{F_{gp}}{F_p}\Big)P_g+\Big(\frac{F_{sp}}{F_p}\Big)P_s=0\\
v_s\frac{\partial P_s}{\partial x}+\Big(\frac{F_{gp}}{F_p}\Big)P_g-\Big(\frac{F_{sp}}{F_p}\Big)P_s=0
\ea
\end{cases}
\ee
Simplifying \eqref{tmas1} will result in:
\be
\frac{\partial^2 P_i}{\partial x^2}(x)+\Bigg[\frac{v_s\Big(\frac{F_{gp}}{F_p}\Big)-v_g\Big(\frac{F_{ps}}{F_p}\Big)} {v_sv_g}\Bigg]\frac{\partial P_g}{\partial x}(x)=0,~~~i\in \{g, s\}
\ee 

Now, using the fact that $\lim_{x \to \infty} P_{i}(x)=0$ and $\int_{0}^{\infty}P_{i}(x)dx=1$, $i\in \{g, s\}$, we get the exponential distribution with the average length as mentioned.
\end{proof}

\section{Wavelet Transform}   
\label{sec:Wavelet}
The wavelet transform is a useful computational tool for processing signal and images. It has been used for variety of applications such as compression, filtering, denoising, and feature extraction. Unlike the Fourier transform in which the basis function are sinusoid functions, wavelet transform is based on small wave with varying frequency and limited time duration. This provides the freedom to choose the time-limited basis functions based on desirable properties such as symmetry, phase shift, orthogonal, biorthogonal so on. Equation (3.1) describes a 1D-discrete time function, which is expanded, by wavelet basis functions in the time and frequency domain as \cite{28}:
\be
\label{6}
f(n)=\sum_k\sum_j a_{j, k}\psi_{j, k}(n)
\ee
where $f(n) \in L_2 (R)$, $k, j\in Z$. $k$ stands for time and $j$ represents the frequency (or translating variables and scale, respectively) \cite{28}. The basis functions are defined as:
\be
\label{6.1}
\psi_{j,k} (n)= 2^{\frac{j}{2}}\psi(2^j n-k) 
\ee
where $\psi(t)$ is the mother wavelet and the $a_{j,k}$ are the corresponding wavelet coefficients. Equation \eqref{6} can also be expanded as the following formula: 
\be
f(n)=\sum_k c_{j_0}(k)\phi_{j_0,k}(n)+\sum_{j=j_0}\sum_k d_j(k)\psi_{j, k}(n)
\ee
The first term of $f(n)$ corresponds to the coarse resolution while the second term represents  the detail (or wavelet) resolution of the signal. $c_{j_0}(k)$  and  $d_j (k)$  are the corresponding approximation (or scale) and detail (or wavelet) coefficients at scale $j$, respectively. These coefficients can be calculated by using the following equations:
\be
\label{7}
c_{j_0}(k)= \langle f(n),\phi_{j_{0},k}(n) \rangle  
\ee
\be
d_{j}(k)= \langle f(n),\psi_{j,k}(n) \rangle  
\ee
Note that $j_0$ is an arbitrary starting scale. Also in this paper the maximum scale $j$ is considered as the decomposition level. For this work, we have used Daubechies wavelet $db4$ as our mother wavelet with $8$ levels of wavelet decomposition \cite{28}.

\section{Stack Run Encoding}   
\label{sec:SRC}
Stack Run Coding is a wavelet based compression technique that aims on exploiting  sparsity in wavelet domain to achieve excellent compression \cite{29}. It can be viewed as performing a varied form  of run length coding technique on wavelet domain. Lets consider the wavelet transformed MT signal to which an uniform quantizer with a step size $q$  is applied across all the levels or subbands. The quantization process discretizes the wavelet coefficients enabling us to partition them into significant (non-zero valued) and non-significant (zero valued) coefficients. We then raster scan the coefficients from lowest level to highest wavelet level encoding the binary representation of the coefficients to its equivalent code using 5 symbols ($+, -, 0, 1, p$). The non-zero coefficients are arranged from their most significant bit (MSB) to their least significant bit (LSB). Stack Run Coding utilizes the intuitiveness that MSB of a code need not be explicitly coded, since its always $1$. Hence, to determine the remaining codeword, all we have to do is append $1$ ahead of the remaining coded bits. During raster scan of level $1$ (lowest level)  of wavelet coefficients, we use energy packing efficiency (EPE) \cite{30} to find the location of inherent peaks in wavelet domain. We incorporate the knowledge of peaks during encoding, so that we can directly use to the peak information to extract MT parameters while decoding.\\

We encode the quantized wavelet coefficients using the 5 symbols ($+, -, 0, 1, p$) as follows:

\begin{enumerate}
\item \emph{Symbol $+$}: Symbol $+$ is not only used to represent the sign of MSB of a positive significant wavelet coefficent, but  also to replace and implicitly represent $1$  present in the MSB.\\
\item \emph{Symbol $-$}: Symbol $-$ is used to represent the sign of MSB of a negative significant wavelet coefficient, as well as replace and implicitly represent zero value in run lengths. That is, if a run length has a value $0$. Then it is correspondingly coded as $-$. If a run length has a value $2$. Then its corresponding binary code would be $1 0$ and it is encoded as $ + - $. Where $1$ is the MSB coded as $+$ and $0$ is coded as $- $.\\
\item \emph{Symbol $0$}:  Symbol $0$ is used to represent the binary value $0$ occurring in the binary representation of the wavelet coefficients.\\
\item \emph{Symbol $1$}: Symbol $1$ is used to represent the binary value $1$ (other than MSB) occurring in the binary representation of the wavelet coefficients.\\
\item \emph{Symbol $p$}:  Symbol $p$ denotes peak location. It is an additional symbol used to identify the location of peaks . It can be viewed as a $flag$ bit in the encoding process that signifies presence of a peak. The location of wavelet coefficients which should be appended with $p$ is retrieved by computing energy packing density in the lowest level wavelet coefficients. Symbol $p$ can be appended at the beginning or end of the respective encoded value.\\
\end{enumerate}


\begin{figure}[h!]
\centering
\includegraphics[scale=.5]{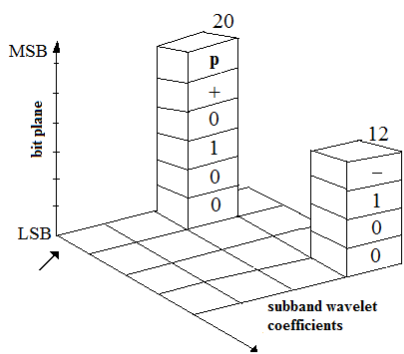}
\caption{symbol Stack Run Coding in wavelet domain}
\label{fig:src}
\end{figure}

For instance, in Fig.~\ref{fig:src}, the arrow points to direction of raster scan. We first come across two non significant values, before we reach our first significant value $20$. Hence, the binary value for $2$ run length is $10$ and equivalently it is coded as $+ -$ as discussed earlier. The binary value corresponding to $20$ is $10100$ and its corresponding coded value is $+0100$. Note that in the code $+$ represents both MSB and sign of the significant coefficient. Since, it was recognized as a peak location, it is appended by symbol $p$ before MSB. Intuitively, peaks are positive; Hence, we can use $p$ to replace $+$ to represent  both MSB and peak. That is, $p$ is equivalent to $p+$, and this is respectively taken into account during decoding. When raster scan comes across coefficient $-12$, its correspondingly encoded as $-100$. Where $1100$ was its binary value and  it has a negative sign, so the MSB is encoded as $-$. Unlike traditional stack run coding, in our approach, we reorder encoded data from MSB to LSB and do not increment LSB to represent the next binary value. Although, we introduce another symbol $p$ to represent peaks, we omit $+$ and implicitly take into account both the sign and MSB bit. Thereby, reducing any additional overhead involved due to the inclusion of a new symbol as a part of traditional stack run coding.
\\\\
During decoding, the reverse process carried out to decode the data. A raster scan is performed on the incoming symbol stream, which is equivalently mapped to its corresponding binary code of the encoded wavelet coefficients. Since the symbol $p$ is only appended on the identified peak wavelet coefficients from the lowest level, encoding an additional symbol does not affect compression performance. And this peak information is used to encode the recovered MT signal based on its $3$ transition states, and to estimate the MT parameters. Thus, in this work, we have employed an effective and efficient wavelet based compression technique to uniquely encode the peak information derived from wavelet domain, as well as provide additional advantage of compression of MT signals.

\subsection{Peak Detection Using Wavelets}
\label{sec:peak}
Wavelets not only offers desirable qualities like, sparsity and inherent noise reduction, but also allows simultaneous time-frequency resolution. Higher wavelet levels  provide better frequency resolution and the lower levels provide better time resolution; Unlike Fourier transforms, where we have either time or frequency resolution at any instant. Thus, this simultaneous time-frequency resolution feature of wavelets can be exploited in our case to locate the peaks present in the MTs. Peak detection  has a significant role in detecting the points of transition between the three states in MTs. We then use the transition frequencies and transition peaks extracted to estimate the MT parameters and understand the dynamic instability present in the system.  Peak detection in wavelet domain is performed based on the Energy Packing Efficiency (EPE )as in \cite{30}. Where EPE gives us a measure of total energy preserved in a specific subband after some thresholding condition is applied with respect to the total energy preserved in a subband before any thresholding and it is defined as given below \cite{30, 31}:
\be
\label{10}
EPE_i=\frac{E_{THi}}{E_{TOTi}} * 100
\ee
\\
where  for every sub band $i$, $EPE_i$ stands for Energy packing efficiency, $ E_{THi}$ stands for the energy preserved in subband after thresholding, $E_{TOTi}$ stands for the total energy preserved in subband before thresholding.

Since we are interested in the time instances where switching occurs (transition frequencies), we sort the significant wavelet coefficients in the lowest  level in ascending order to exploit time resolution property of wavelets. This is because the lowest level has narrowest bins,  which aids in locating the peaks in time domain of the MT signal. A  threshold is then applied to retain a specific percentage of the significant coefficients, where each of the retained coefficients correspond to peaks in the time domain. The location of those significant coefficients is identified with a $flag$. Later on, this $flag$ is used by the stack run encoded to identify the locations of significant wavelet coefficients that have to be appended with a symbol $p$ denoting peak.

\section{Results and Discussion for Experimental Data}
\label{sec:results}
In this section, we discuss the experimental results and the efficacy of our wavelet based stack run coding compression approach on MTs.  The MT data used for this work are results of experiments on MTs (composed of purified $\alpha \beta_{II}$ isotopes from bovine brain tubulin) performed by O. Azarenko, L. Wilson and M.A Jordan at the University of California, Santa Barbara. Tubulin proteins were first purified from the bovine brain and then seeded to polymerize at $37^{\circ}C$.  
Using the differential interference contrast video microscopy, every $2-6$ s time interval, data points were collected representing the length individual purified MTs  at their plus ends. 
MT lengths were analyzed using the Real Time Measurement program. Growing and shortening rates were calculated by least-squares regression analysis of the data points for each phase of growth or shortening. Growing and shortening thresholds are set to an increase in length by $0.2~ \mu$m at a rate of $0.15 ~\mu$m/min and a decrease in length by $0.2~ \mu$m at a rate of $0.3~ \mu$m/min, respectively. Any length changes equal to or less than $0.2~ \mu$m over the duration of six data points were considered attenuation phases  since the length changes were below the resolution of the microscope. It should be noted that the experimental detection limit for length changes corresponds to about $400-800$ tubulin dimers. The supplied data, however, was in the form of a hard copy graphs, that they were scanned and then digitized using the software "DigitizeIt" (\url{http: //www.digitizeit.de/}).  
See Rezania et. al for more details \cite{32}.

\begin{figure}[tp]
     \centerline{\epsfig{width=2.35\figurewidth,file=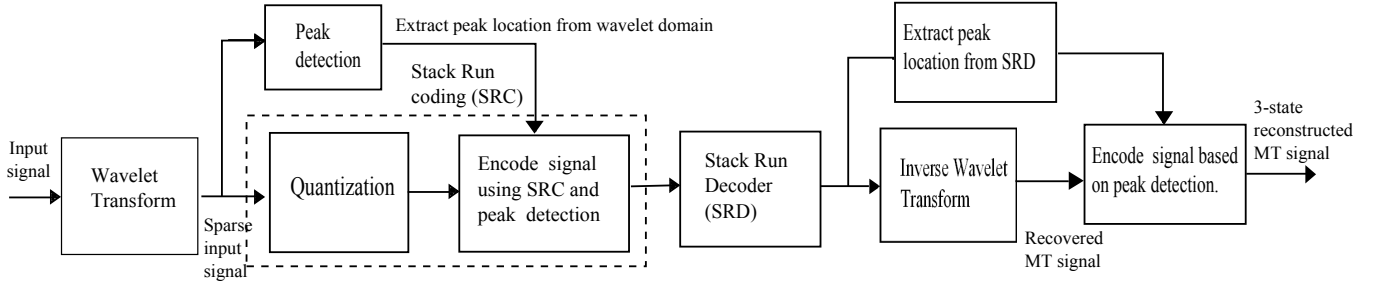}}
     \caption{Framework of our proposed wavelet based stack run coding compression of MT signals}
     \label{fig:block_diagram}
\end{figure}

In our proposed method, we first apply discrete wavelet transform to get a sparse representation for MT length signal. Fig. \ref{fig:ori_sig} illustrates one of the MT length datasets used for our experiments. In this work, we have used Daubechies D2(db2) mother wavelet with $8$ levels of signal decomposition, which  gave  better performance rather than other mother wavelet families and decomposition levels in our work. The sparse wavelet coefficients are then quantized by employing an uniform quantizer with a step size $q$ across all levels or subbands. We also perform peak detection on wavelet on the lowest level wavelet coefficients using energy packing density (EPE). The identified peaks are provided as input to stack run encoder during encoding process to encode the peaks with an additional symbol $p$ denoting peaks. The encoded stream is transmitted through the channel and decoded at the receiver side decoder. The symbol stream is mapped back to its wavelet coefficients and retrieve the information about location of peaks. Inverse wavelet transform is then performed on the MTs to recover the MT signal. The retrieved peak information is used to encode the resultant $3$ states of the MT signal and extract the switching frequencies between the three states of growth, shrinkage and pause. Fig~\ref{fig:block_diagram} elucidates the framework of the proposed approach.
\\\\
The effects of using our proposed compression scheme on dynamic instability of the MTs are studied by estimating various MT parameters for varying quantization step size $q$.  Lower the step size $q$, better the approximation of the signal, and lower the data compression. Conversely, higher the step size $q$, lesser the approximation of the signal, and higher data compression. To measure data compression, we define compression ratio (CR) as below:
\\\\
\be
CR= \frac{\text{Number of bits in the original data}}{\text{Number of bits in the compressed data}}
\ee
\\
Fig. \ref{fig:src_recovery}, depicts the reconstructed $3$-state MT signal for varying values of $q$. Where the $blue$ signal is the original encoded $3$-state signal and the $red$ signal is the reconstructed $3$-state signal. We can confirm from Fig. \ref{fig:src_recovery}, that for lower rates of compression (smaller $q$), we have better approximation of the signal and for higher rates of compression (larger $q$) we tend to over approximate or under approximate the signal causing deviation of the predicted signal from original signal.
\\
\begin{figure}[tp]
     \centerline{\epsfig{width=\figurewidth,file=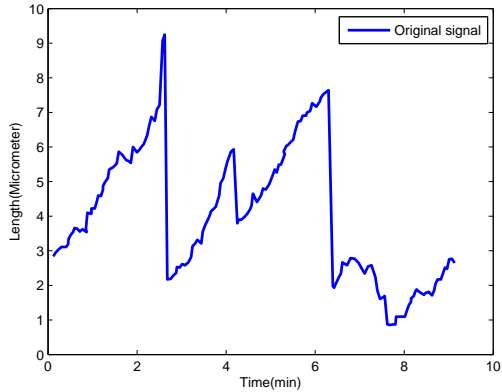}}
     \caption{Original signal}
     \label{fig:ori_sig}
\end{figure}

\begin{figure*}[tp] \small
\centering
\begin{tabular}{cc}
 \epsfig{width=0.95\figurewidth,file=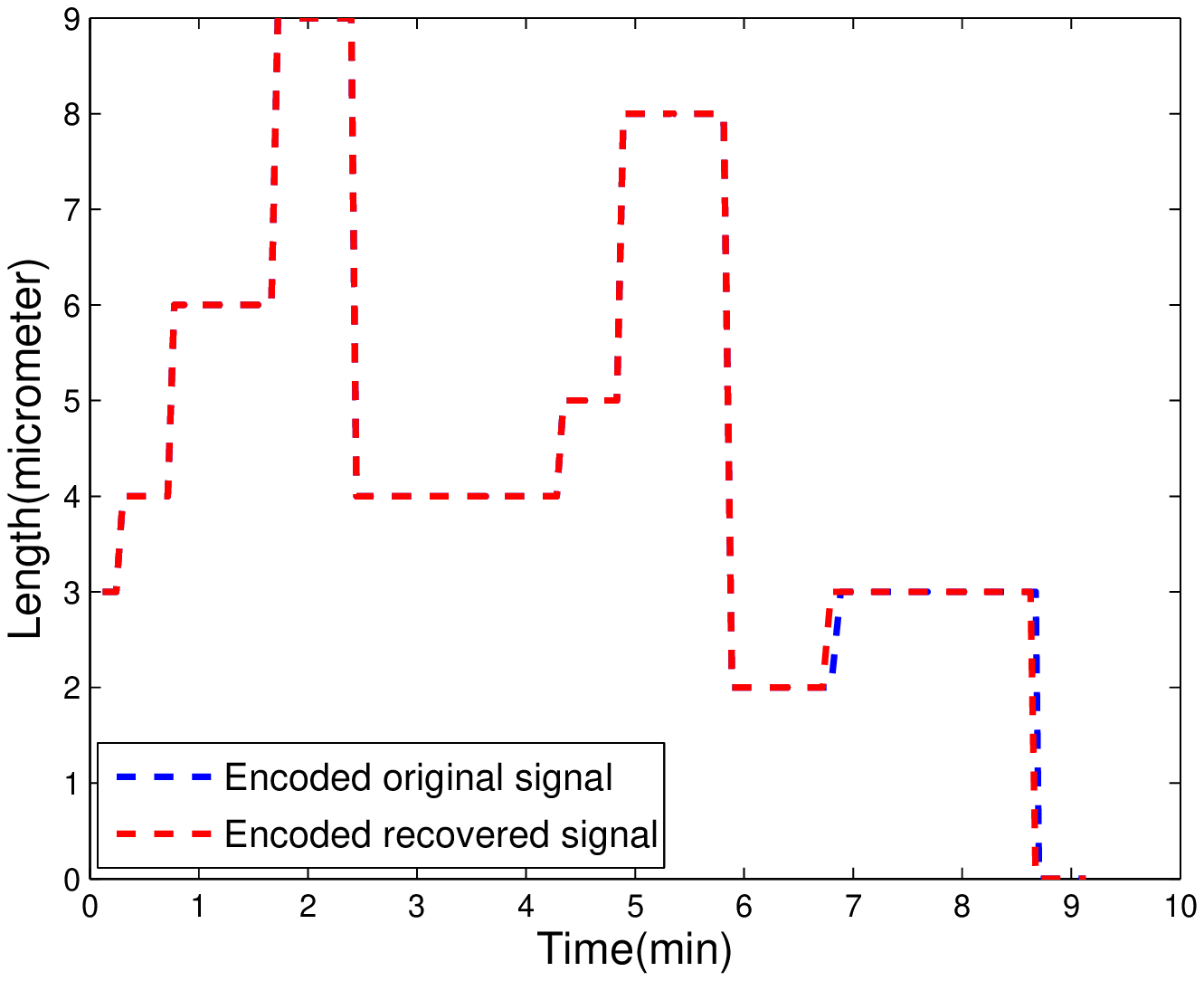}&
 \epsfig{width=0.95\figurewidth,file=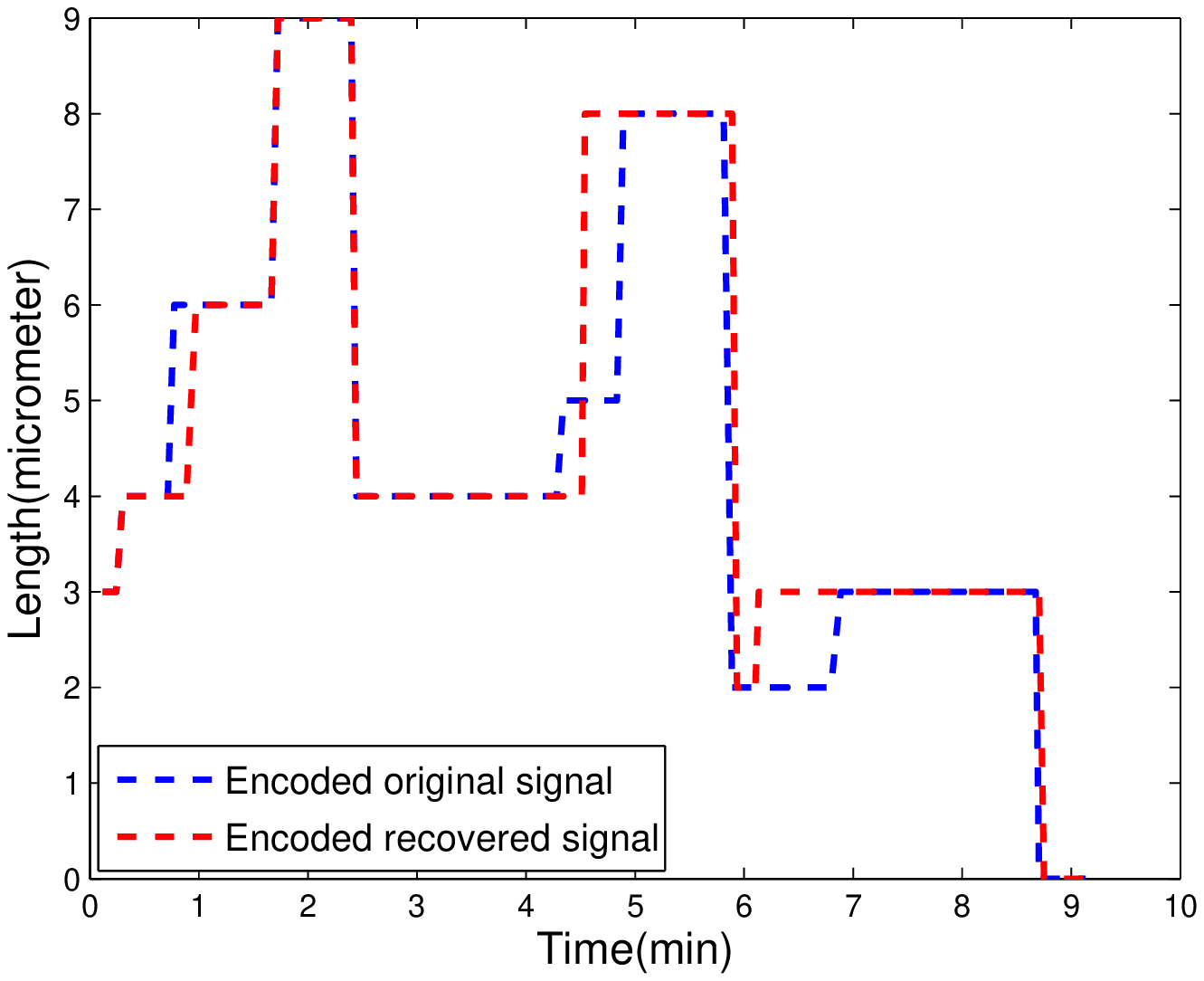}\\
     (a) Reconstructed MT signal for $q$=$0.04$ , $CR = 1.14$  &  (b) Reconstructed MT signal for $q$=$0.08$ , 
$CR= 1.33$   \\[0.5em]
 \epsfig{width=0.95\figurewidth,file=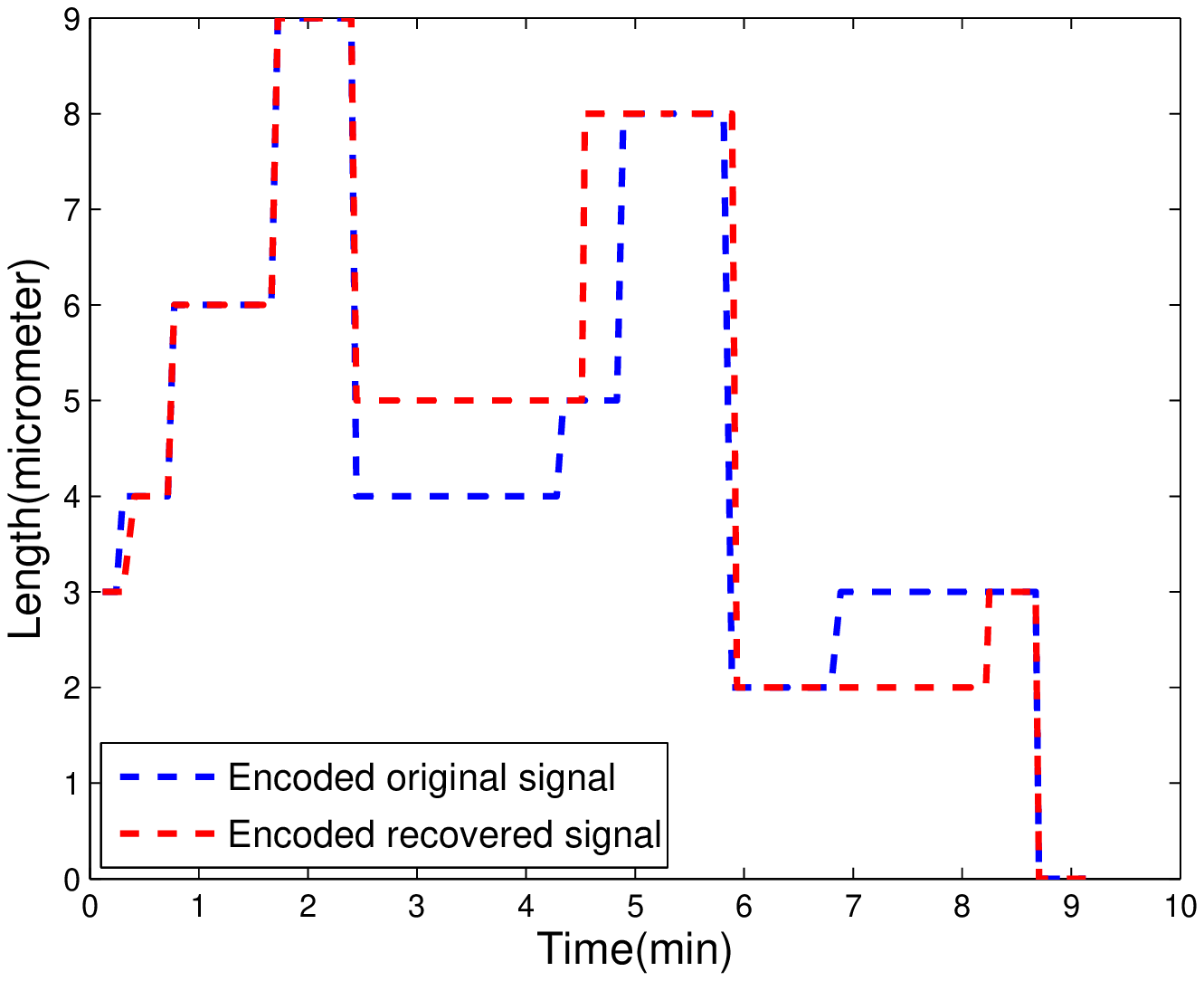}&
   \epsfig{width=0.95\figurewidth,file=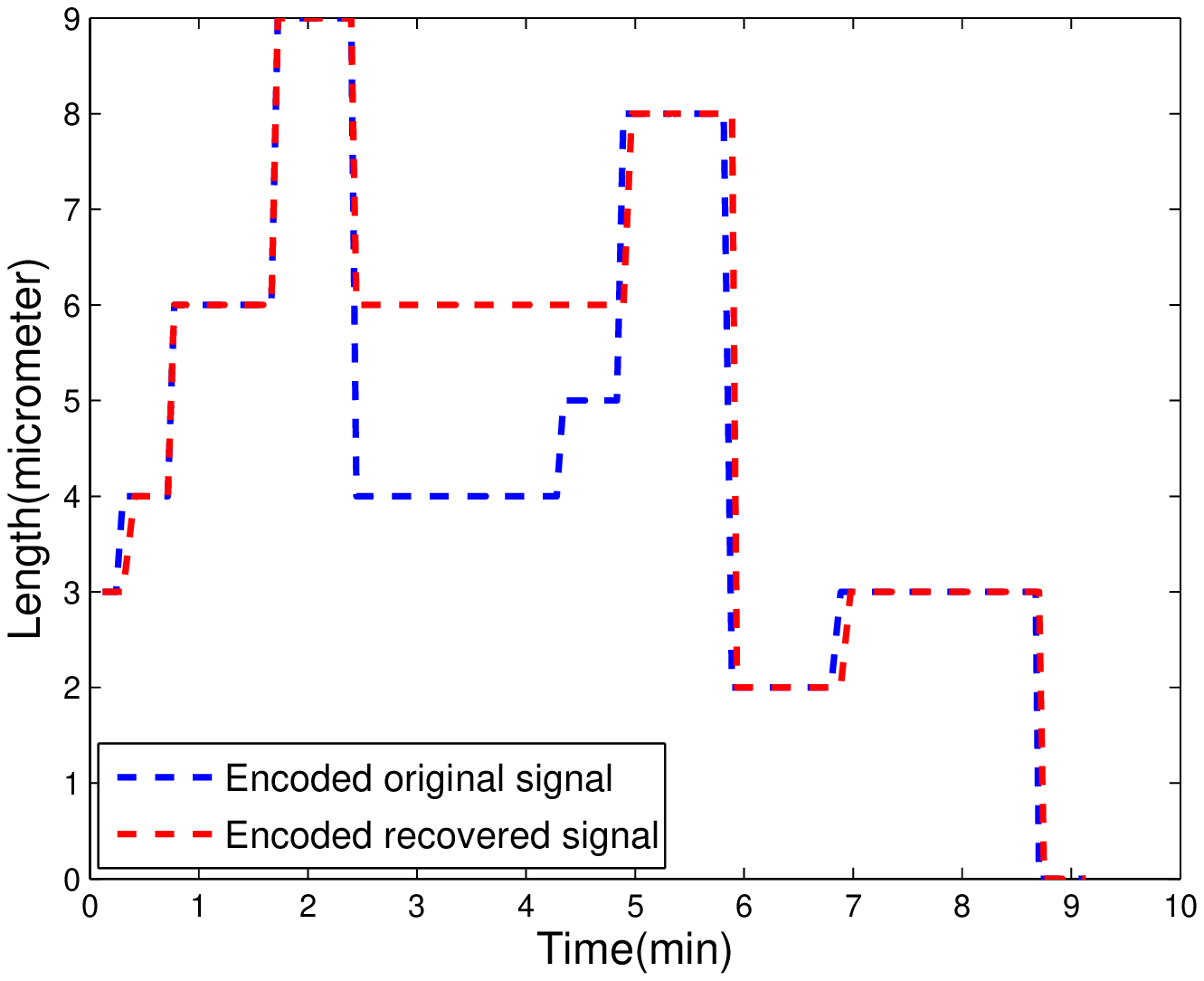}\\
        (c) Reconstructed MT signal for $q$=$0.12$, $CR= 1.52$ & (d) Reconstructed MT signal for $q$=$0.16$, 
$CR = 1.72$ \\[0.5em]
 \epsfig{width=0.95\figurewidth,file=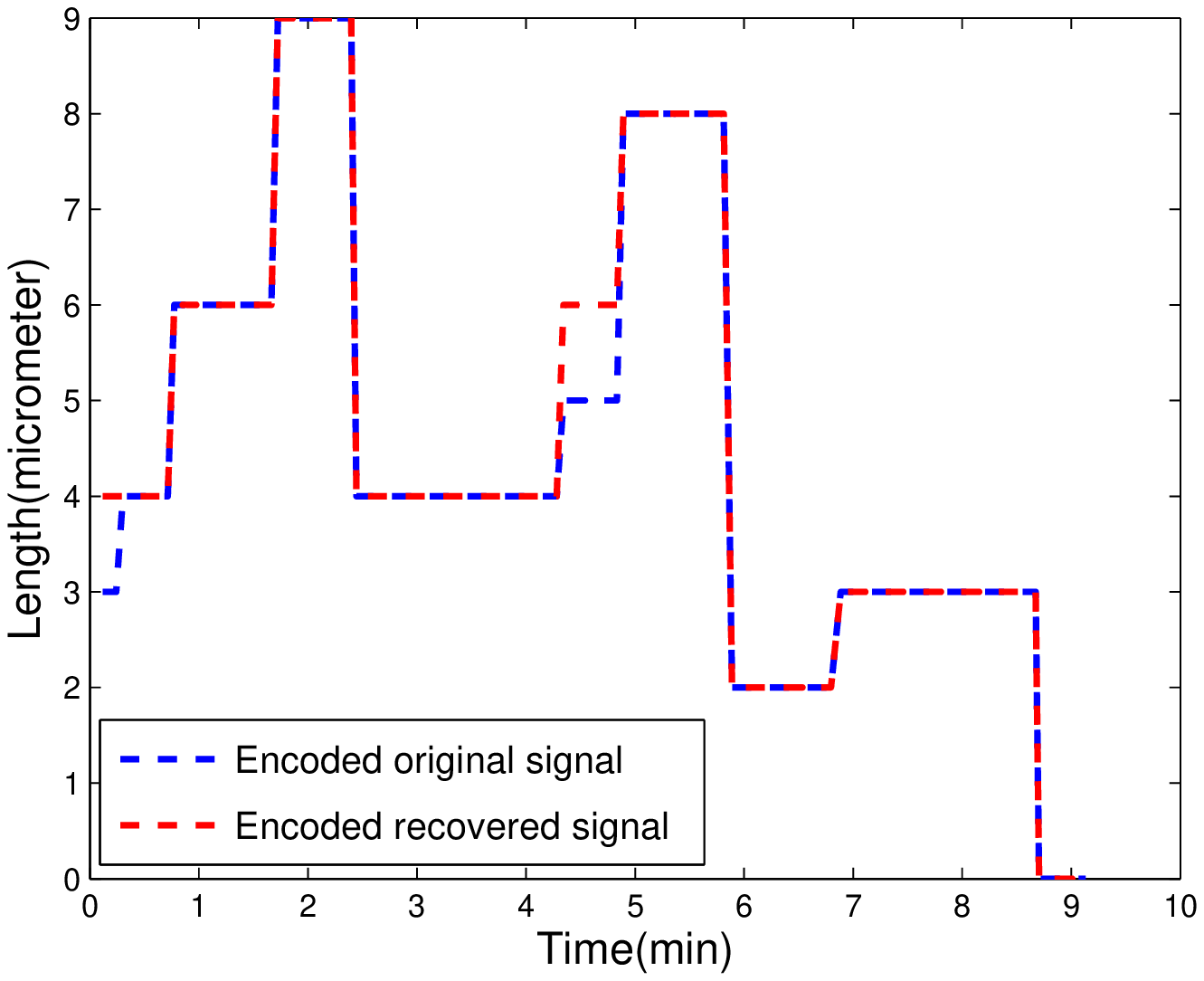}&
   \epsfig{width=0.95\figurewidth,file=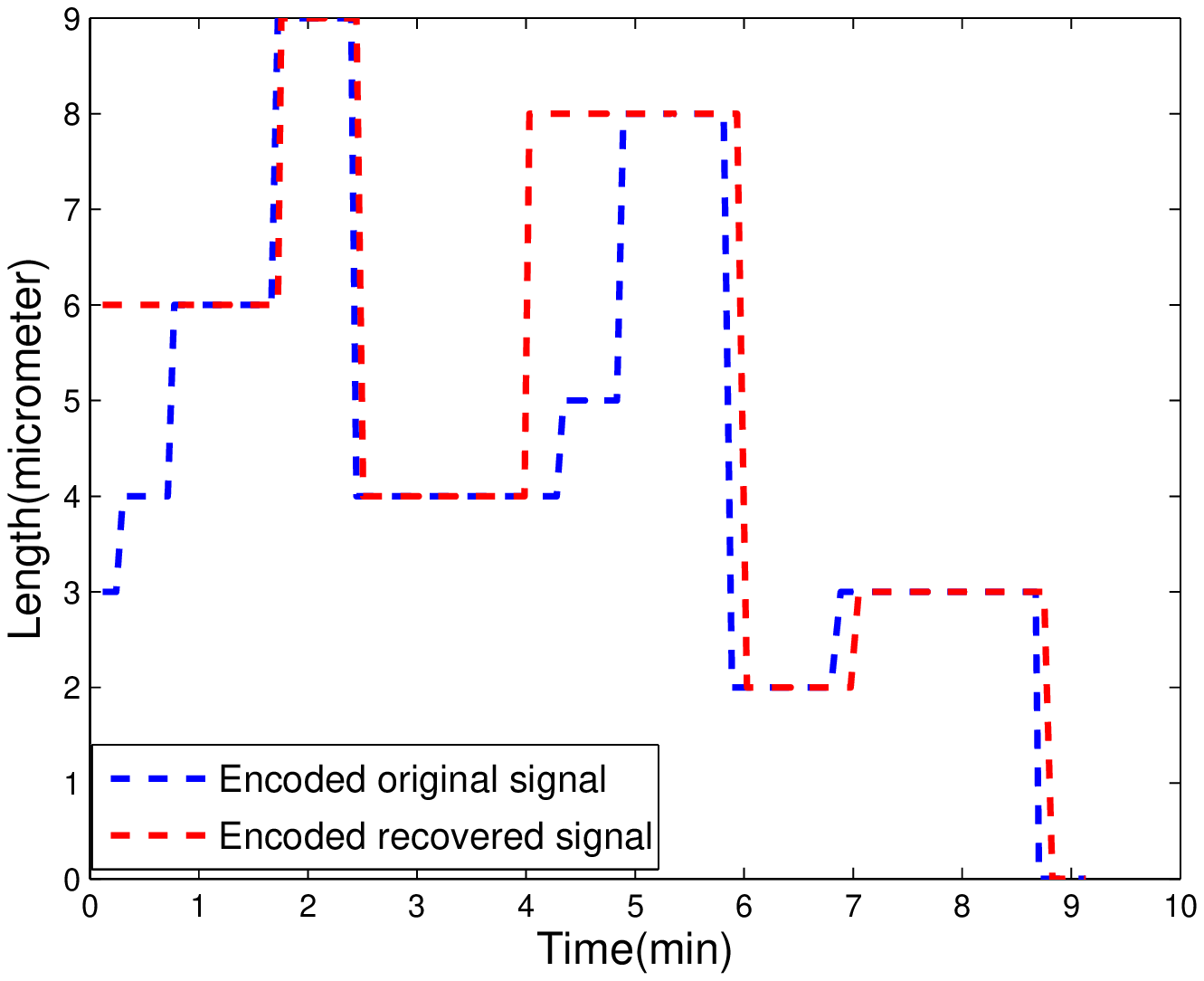}\\
(e) Reconstructed MT signal for $q$=$0.2$, $CR=1.96$ & (f) Reconstructed MT signal for $q$=$0.4$, 
$CR= 2.62$ \\[0.5em]
  \end{tabular}
\vspace*{-0.1in}
\caption{\label{fig:src_recovery}
Reconstructed 3-state MT signal for various quantization step size $q$.}
\end{figure*}
 
\begin{figure*}[tp] \small
\centering
\begin{tabular}{cc}
 \epsfig{width=0.95\figurewidth,file=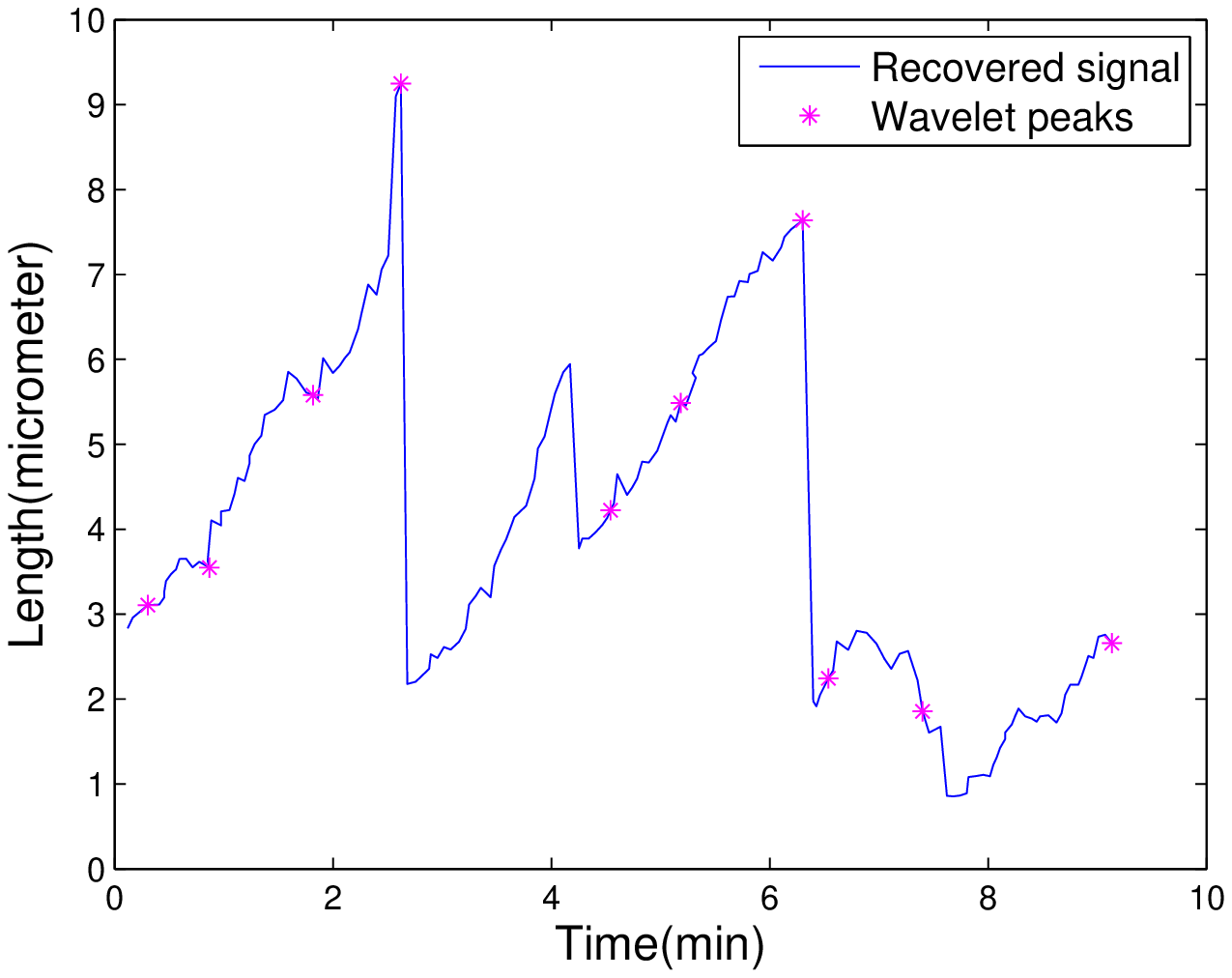}&
 \epsfig{width=0.95\figurewidth,file=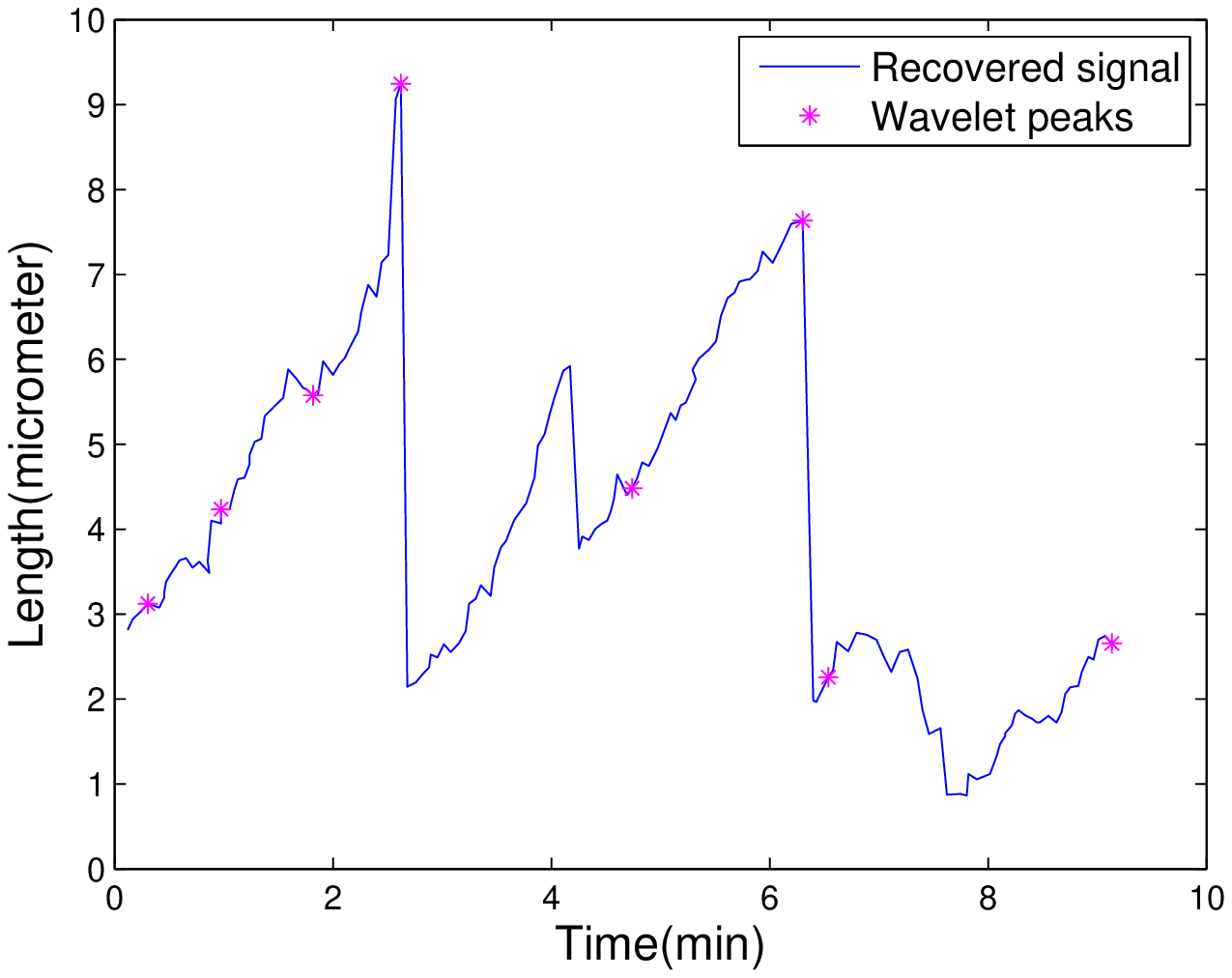}\\
     (a) Recovered MT signal for $q$=$0.04$ , $CR = 1.14$  &  (b) Recovered MT signal for $q$=$0.08$, 
$CR=1.33.$\\[0.5em]
 \epsfig{width=0.95\figurewidth,file=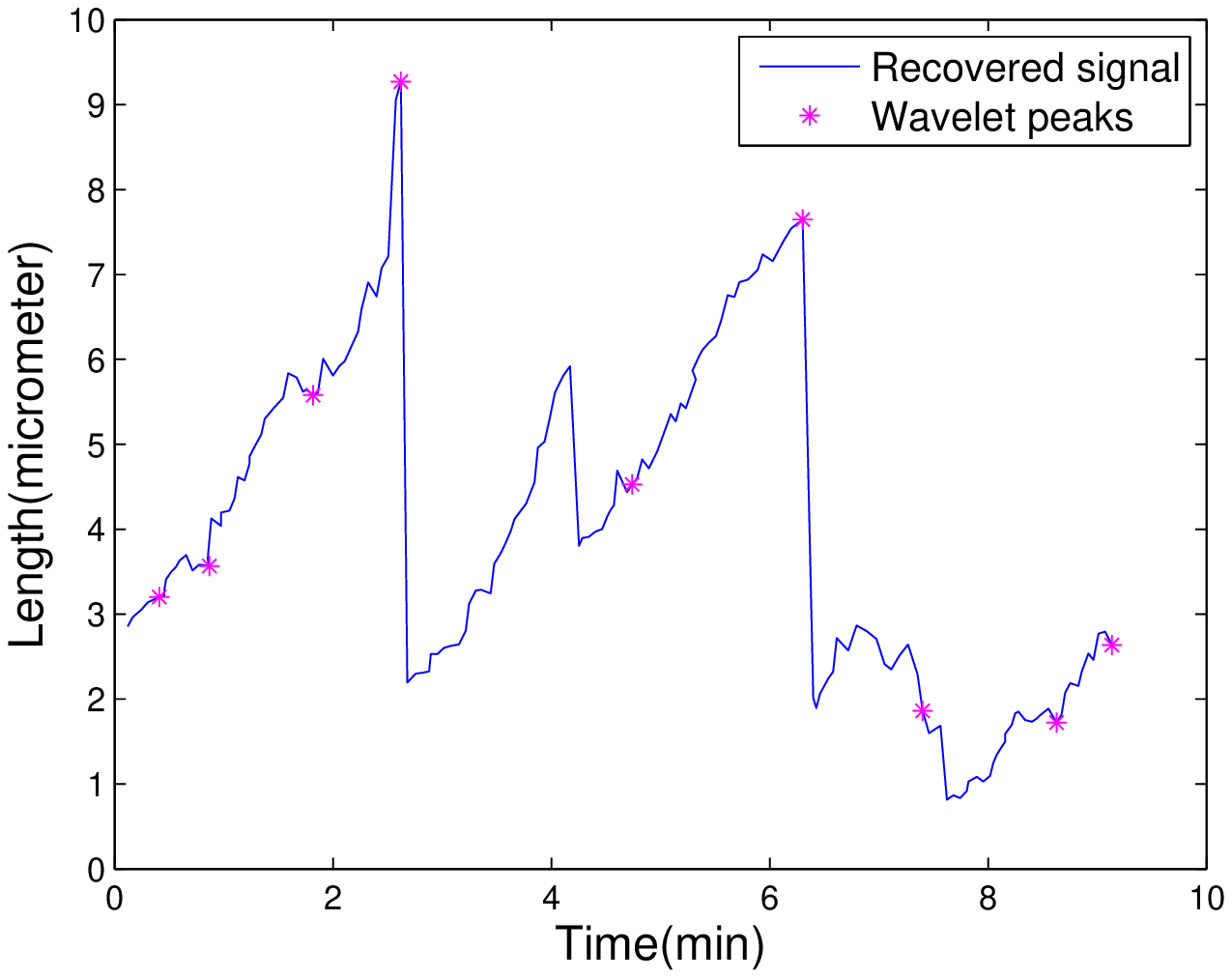}&
   \epsfig{width=0.95\figurewidth,file=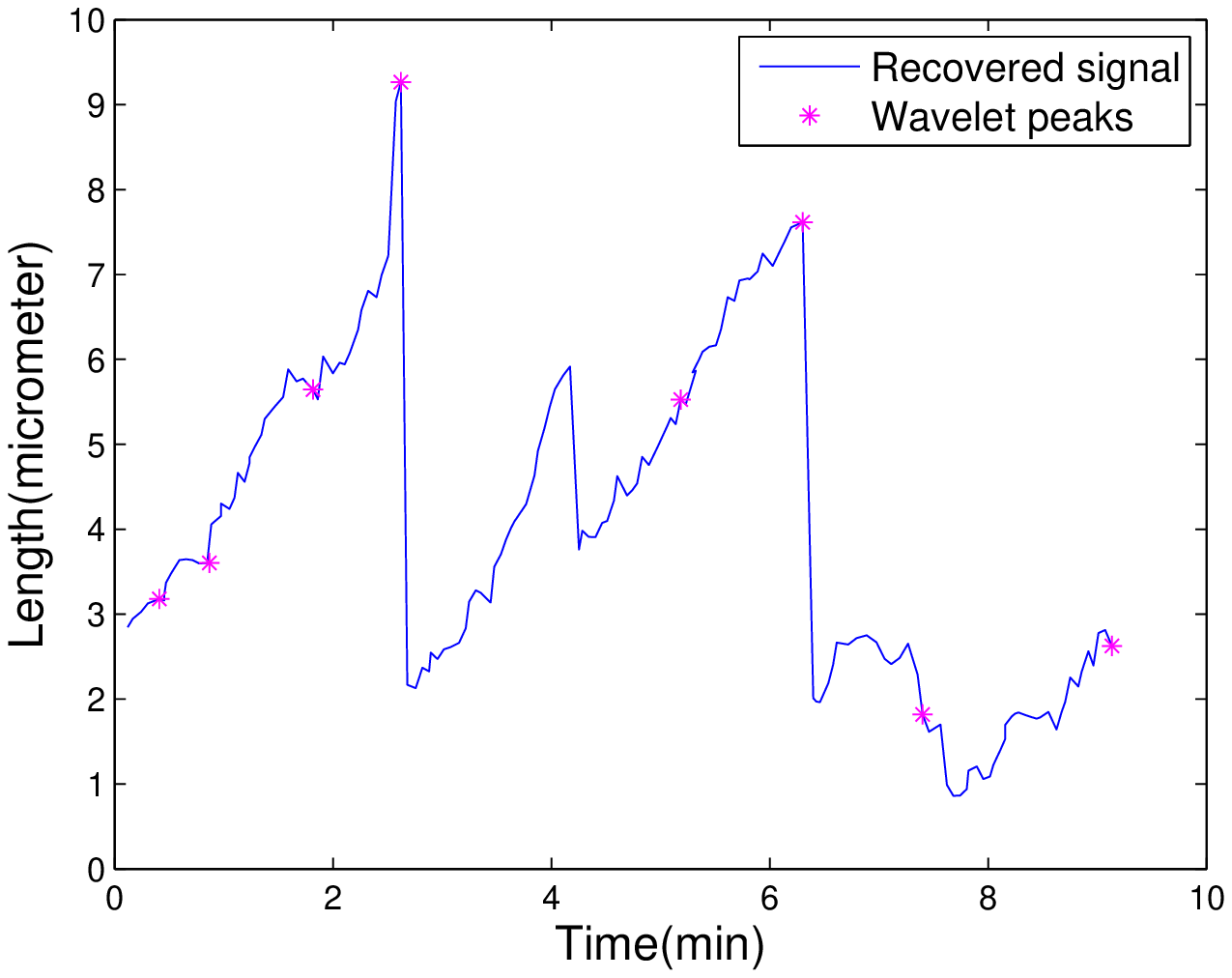}\\
        (c) Recovered MT signal for $q$=$0.12$, $CR= 1.52$ & (d) Recovered MT signal for $q$=$0.16$, 
$CR=1.72$ \\[0.5em]
 \epsfig{width=0.95\figurewidth,file=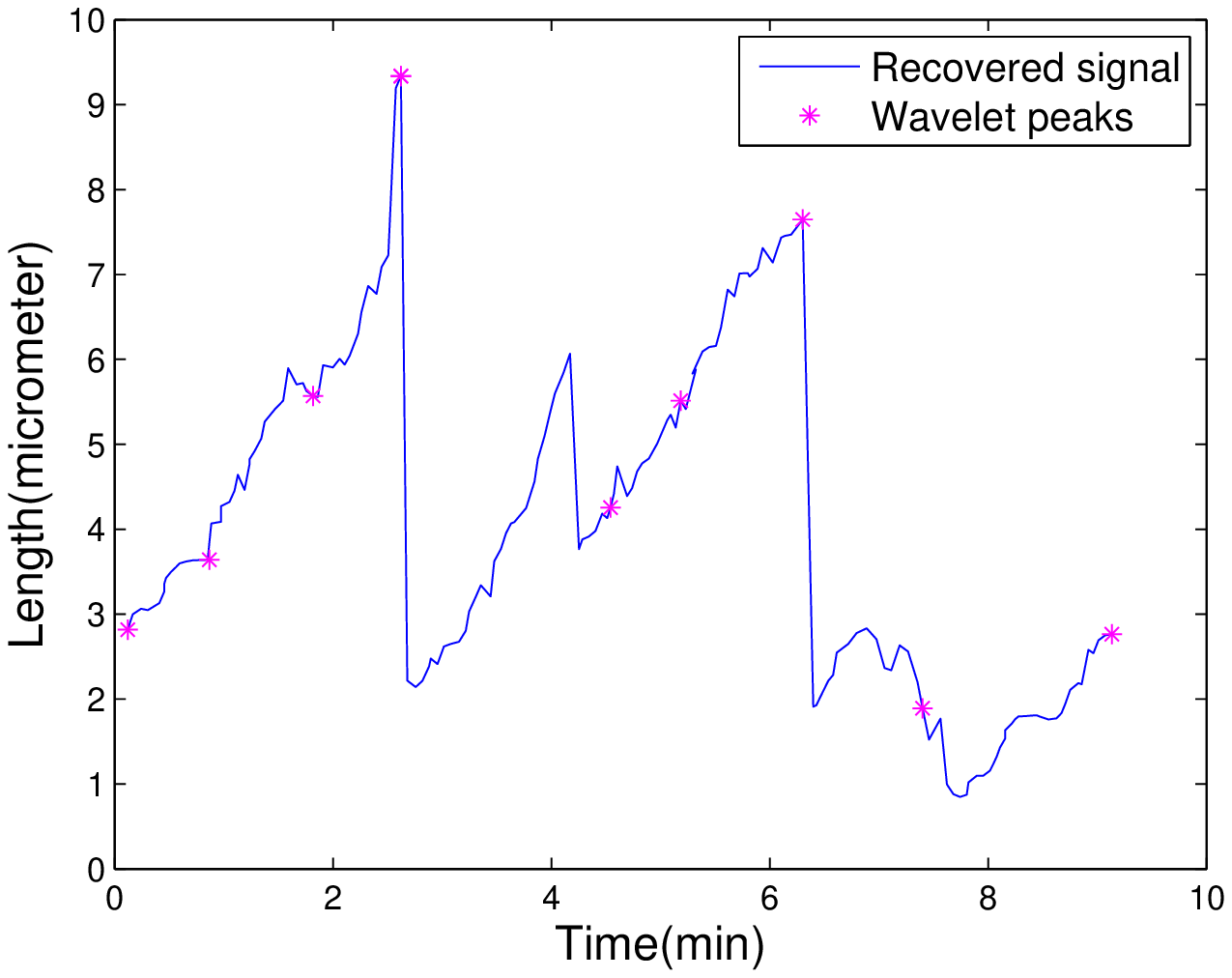}&
   \epsfig{width=0.95\figurewidth,file=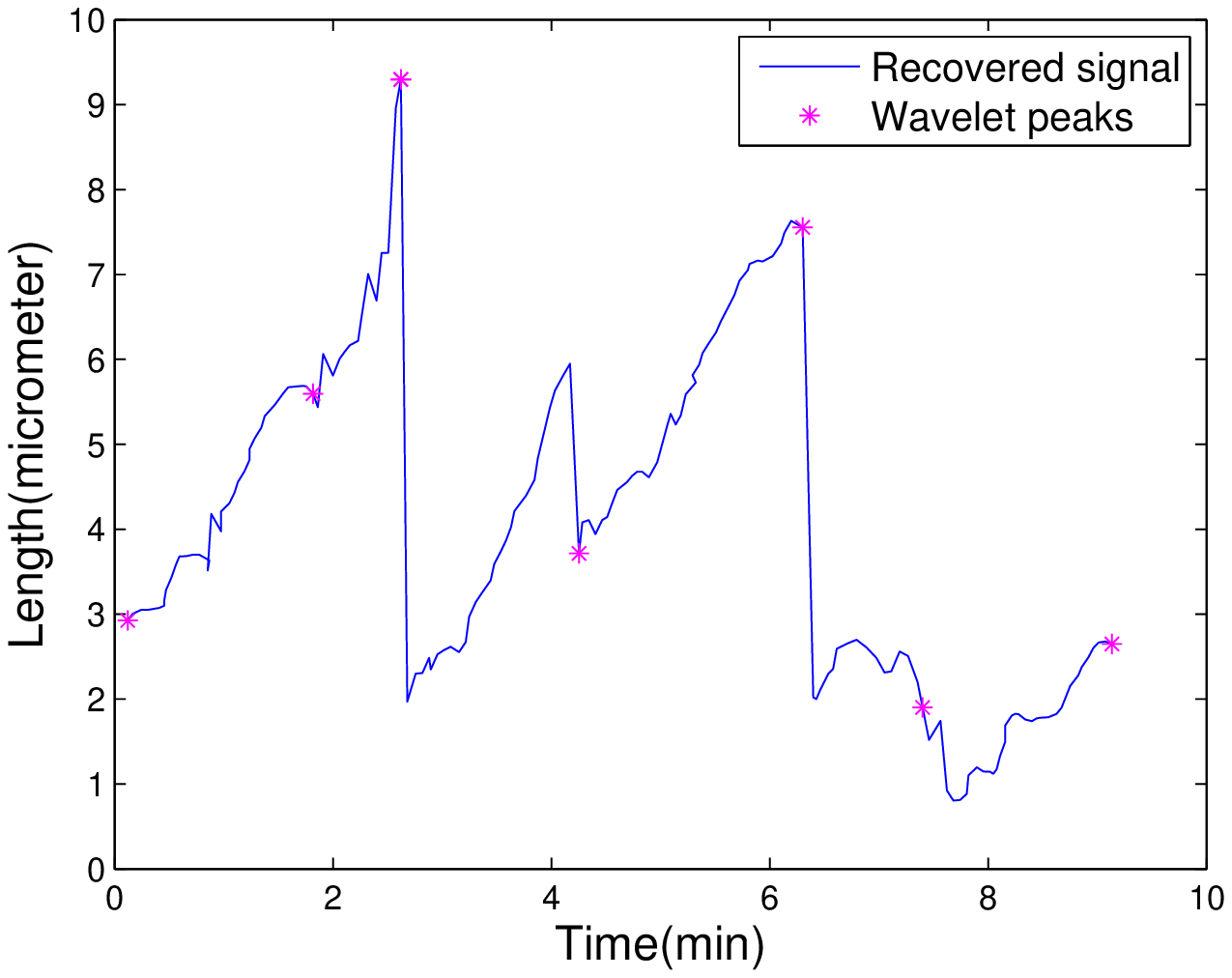}\\
(e) Recovered MT signal for $q$=$0.2$, $CR=1.96$ & (f) Recovered MT signal for $q$=$0.4$, 
$CR=2.62$ \\[0.5em]
  \end{tabular}
\vspace*{-0.1in}
\caption{\label{fig:wave_rec}
Recovered MT signal at decoder side for various quantization step size $q$.}
\end{figure*}

As discussed in our previous section \ref{sec:peak}, we perform peak detection in the wavelet domain, because of its inherent ability to reduce noise and simultaneous time-frequency resolution. It is essential to affirm the location of peaks occurring in MTs, because those peaks translate into points at which MTs transition between growth, shrinkage and pause states and to further estimate the MT parameters. Therefore, we employ EPE at the significant wavelet coefficients present at the lowest level, since they have narrow bins implying better time resolution. We then set a threshold EPE and select the qualifying significant wavelet coefficients after sorting them from highest to lowest values. We have set the threshold parameter as $E_{TH} \geq 0.85*E_{TOT}$. The thresholding procedure involved is as given below:

\begin{itemize}
\item Step 1: Sort the wavelet coefficients in the lowest level in descending order.
\item Step 2: Compute the  total energy present in the wavelet coefficients ($\bf{wc}$) in the desired level. Where total energy is calculated by :
\be
E_{TOT} = \sum{\bf{wc}^2}
\ee
\item Step 3: Fix the desired threshold percentage of significant wavelet coefficients that should be retained. In our case, we retain values such that $85\%$ of the total energy of the coefficients in that level is preserved. That is :
\be
E_{TH} \geq 0.85*E_{TOT}
\ee
\item Step 4: Sum up energy of the sorted coefficients until the threshold condition is reached.\\*
\end{itemize}
Thus, the number of significant wavelet coefficients selected is sensitive to our threshold parameter, since  it regulates the  number of potential peaks detected. Thereby, thresholding forms an important aspect in estimating the switching parameters of the TMN based MT signal. \\

The wavelet coefficients in the decoder side are recovered using inverse wavelet transform. The extracted peak information from the decoder side is used to round off the recovered MT length signal to its nearest peak value to obtain an encoded $3$ state MT signal.  We then study the changes due to dynamic instability of MTs occurring across various quantization step size $q$ by estimating MT parameters.  Fig \ref{fig:wave_rec} shows the resultant MT length signal after wavelet recovery for various step size values ($q$).
And its corresponding encoded $3$ state MT length signal is given in Fig \ref{fig:src_recovery}. From both the recovered and encoded MT signal figures, we note that for lower quantization step sizes, better peak detection, since we preserve more signal details, thereby providing better approximation of the encoded signal. For higher quantization step sizes,  we tend to lose details because of over approximation of signal. Therefore, this results in a smoother recovered signal and smoother transition between states causing  less accurate peak detection, hence leading to an over approximated MT signal. \\

\begin{figure*}[tp] \small
\centering
\begin{tabular}{cc}
\epsfig{width=1\figurewidth,file=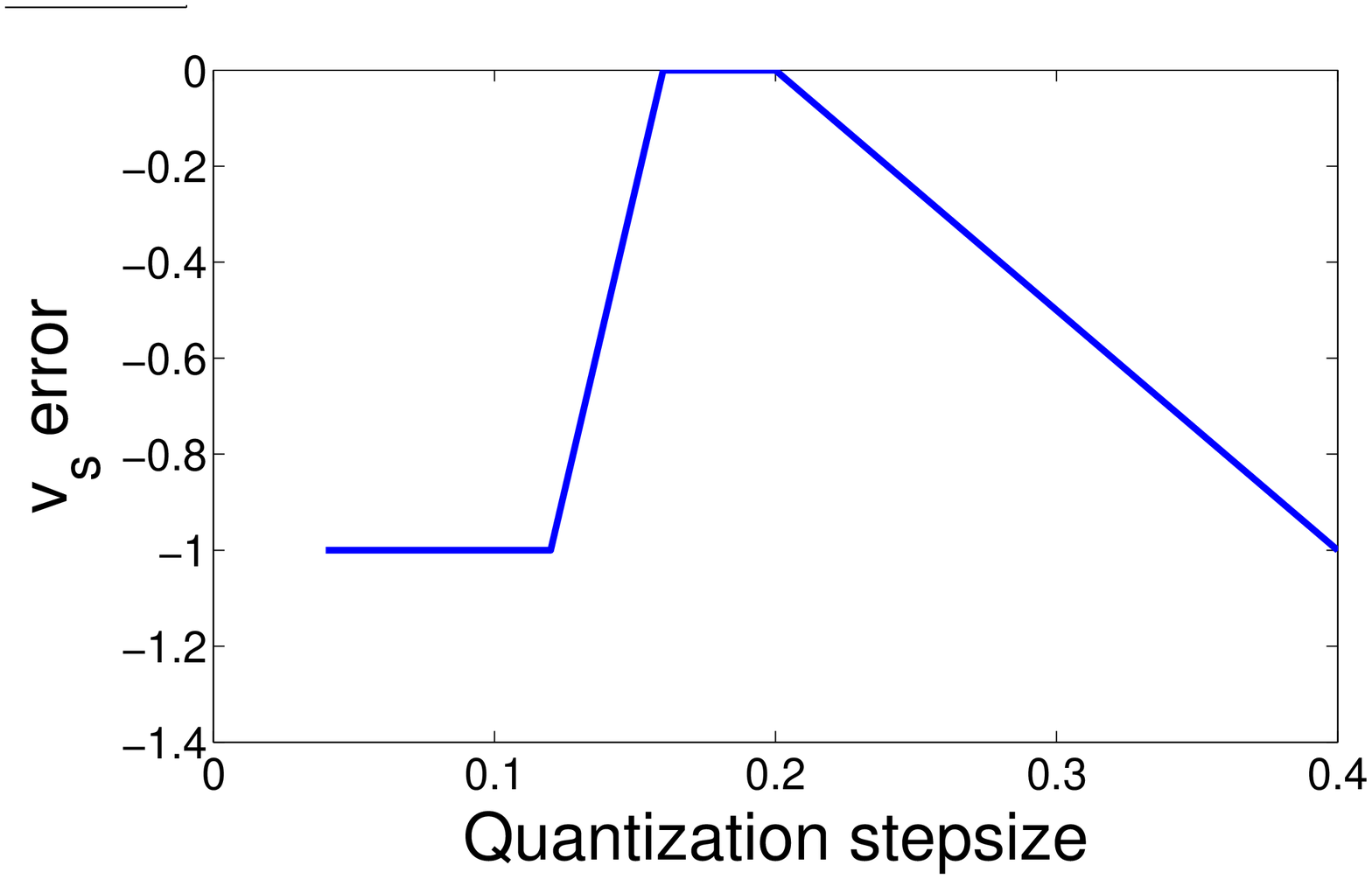}&
\epsfig{width=1\figurewidth,file=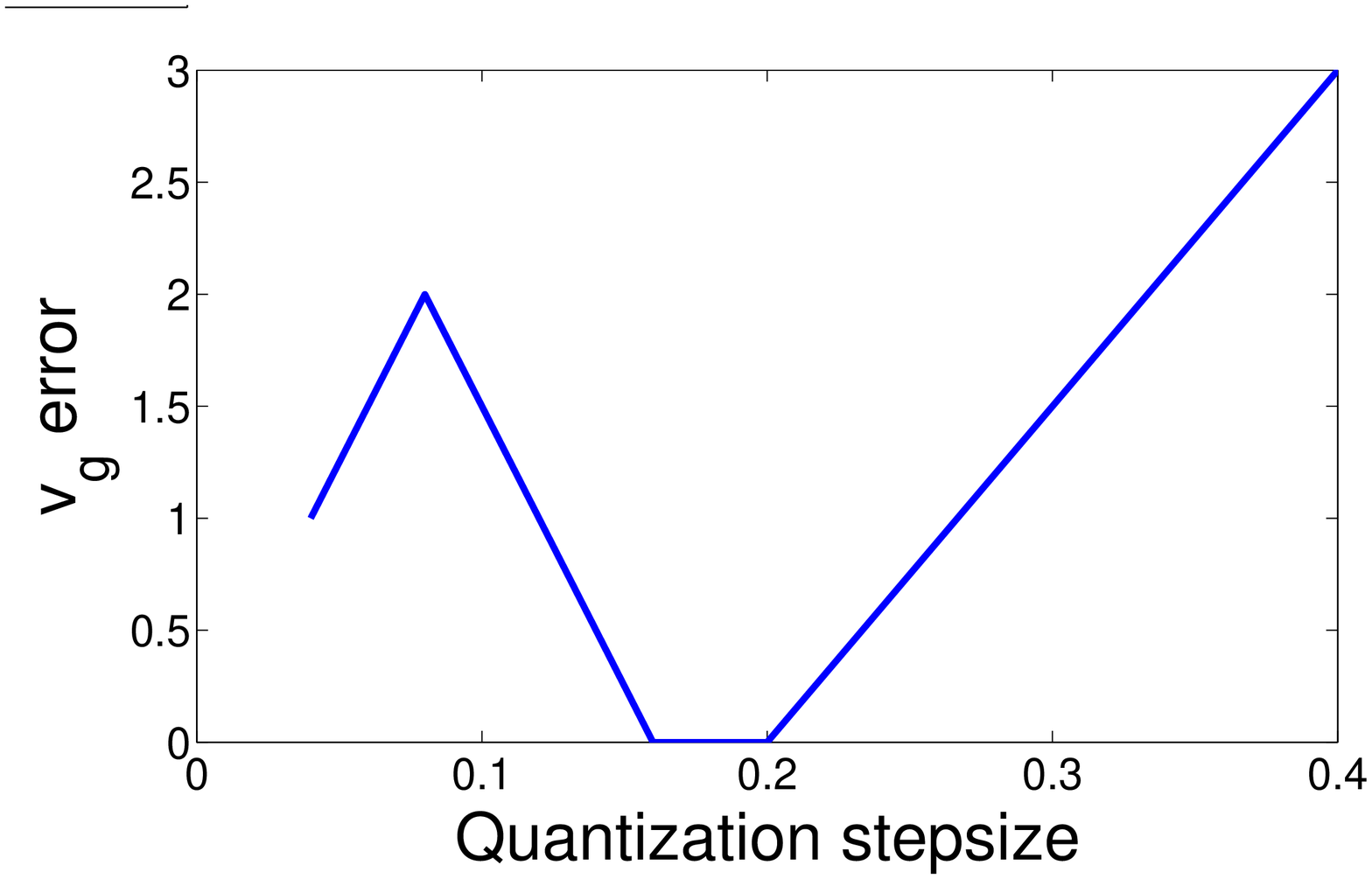}\\
      (a) Shrinkage rate error & (b) Growth rate error \\[0.5em]
 \epsfig{width=1\figurewidth,file=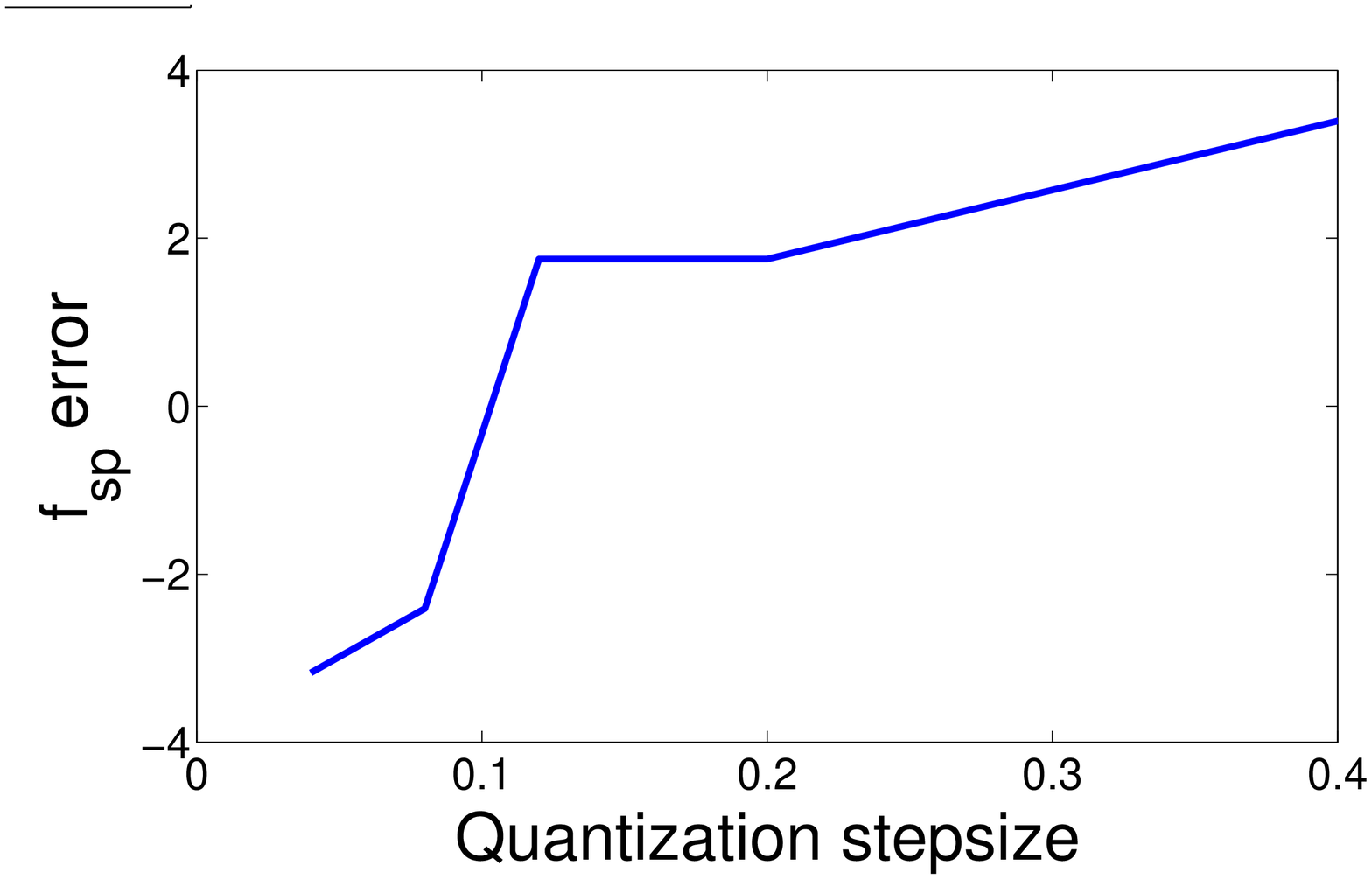}&
 \epsfig{width=1\figurewidth,file=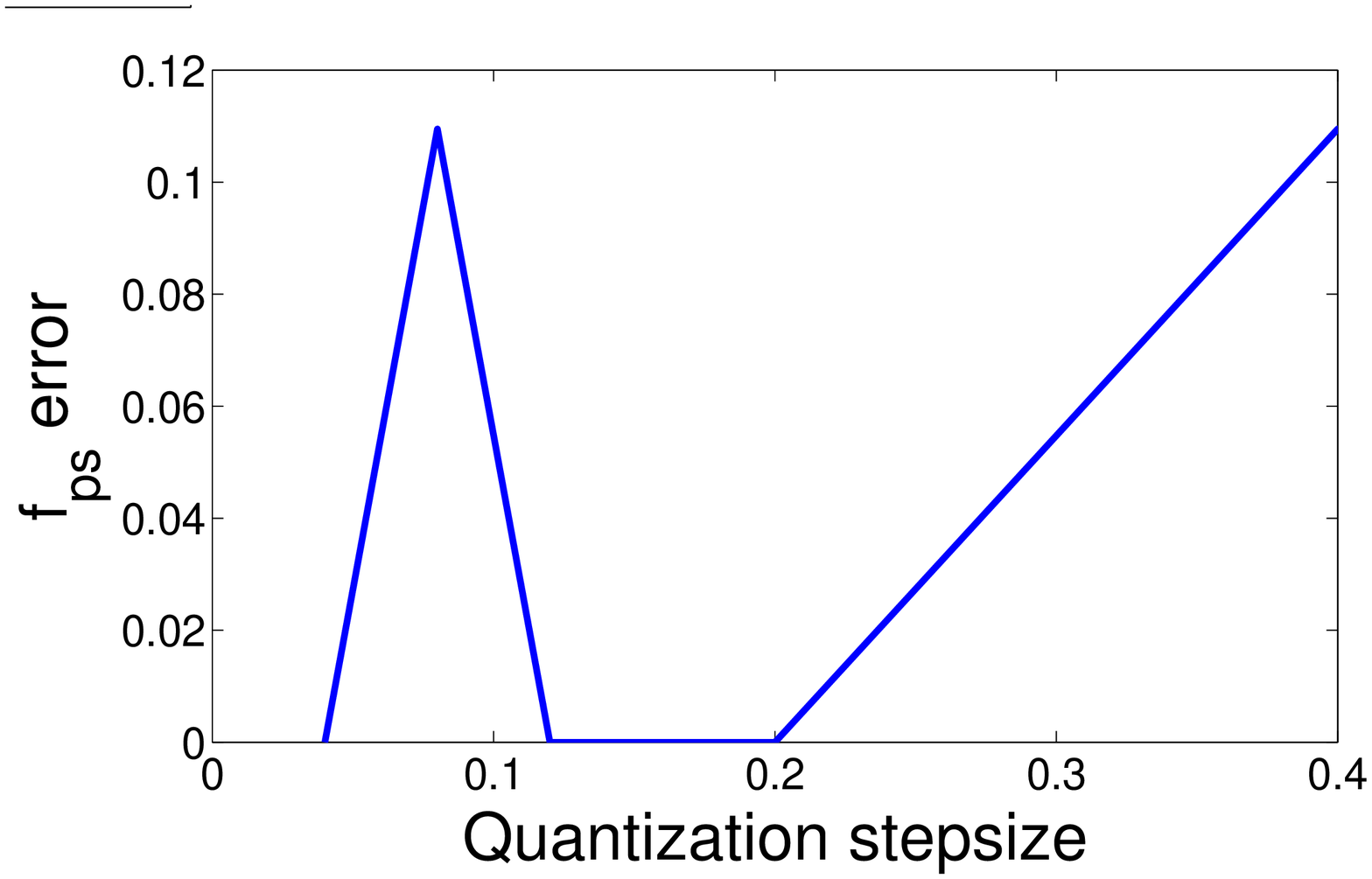}\\
        (c) Shrinkage to pause transition frequency error & (d) Pause to shrinkage transition frequency error  \\[0.5em]
 \epsfig{width=1\figurewidth,file=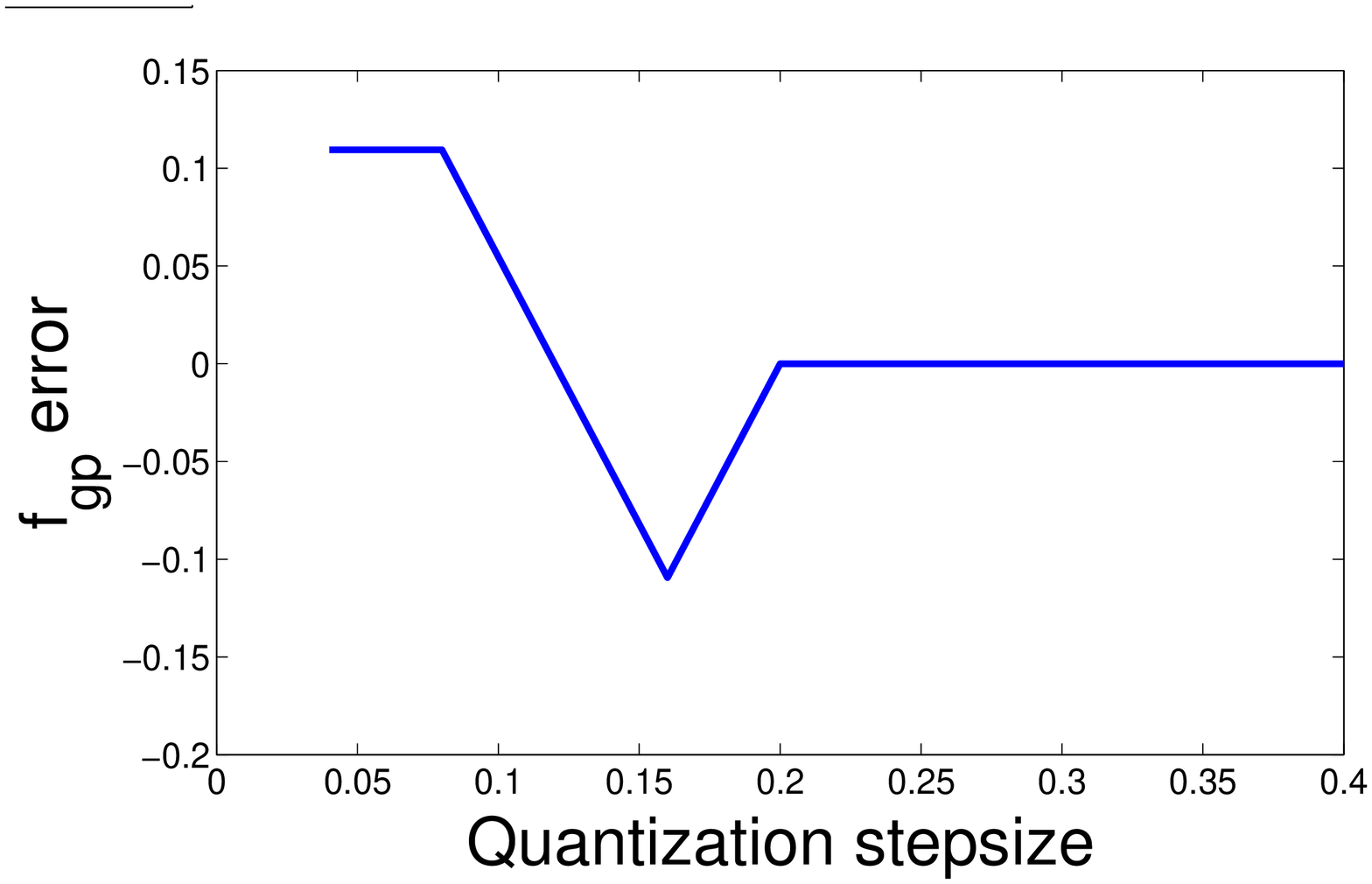}&
 \epsfig{width=1\figurewidth,file=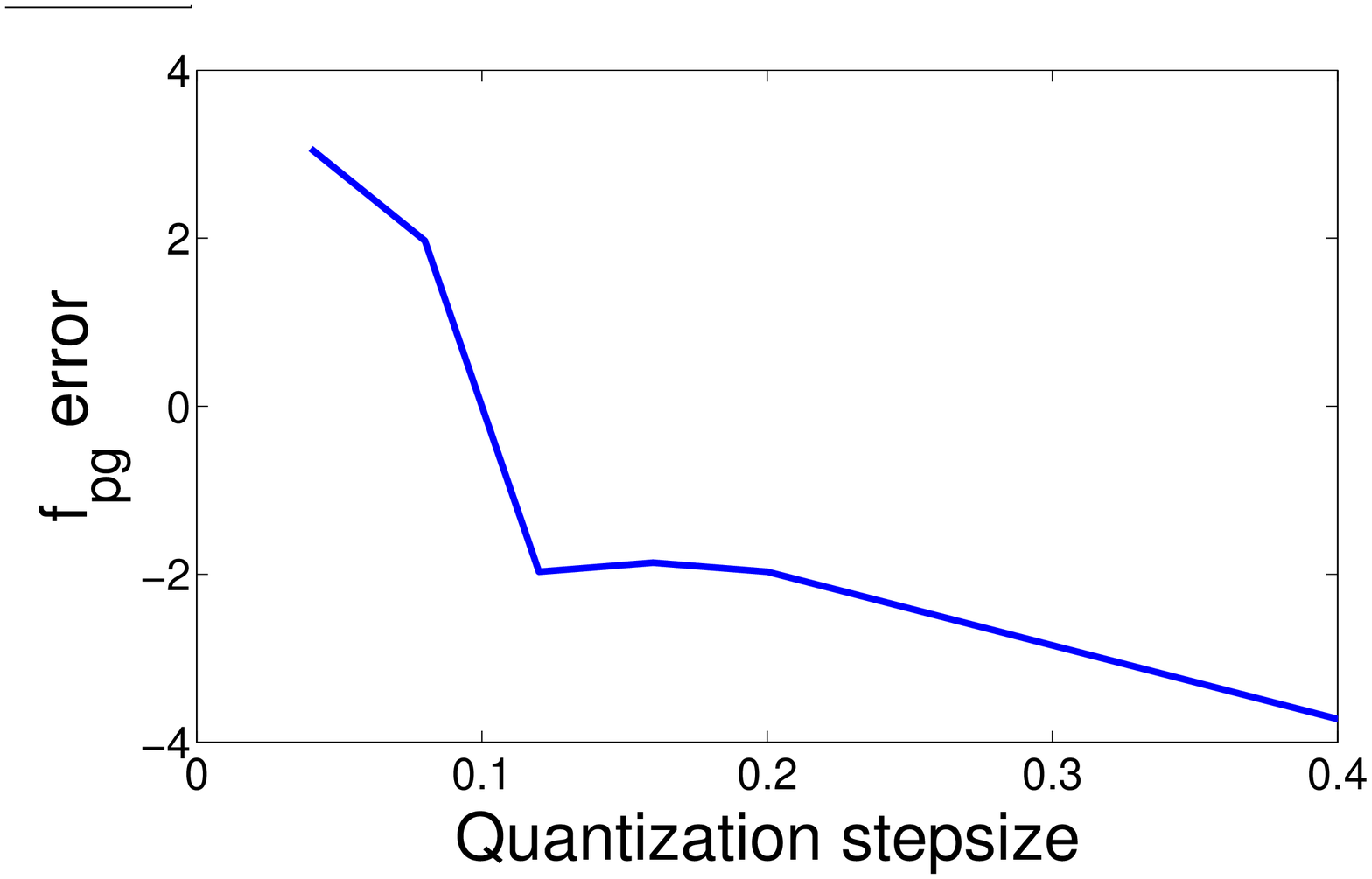}\\
        (e) Growth to pause transition frequency error  & (f) Pause to growth transition frequency error  \\[0.5em]
 \epsfig{width=1\figurewidth,file=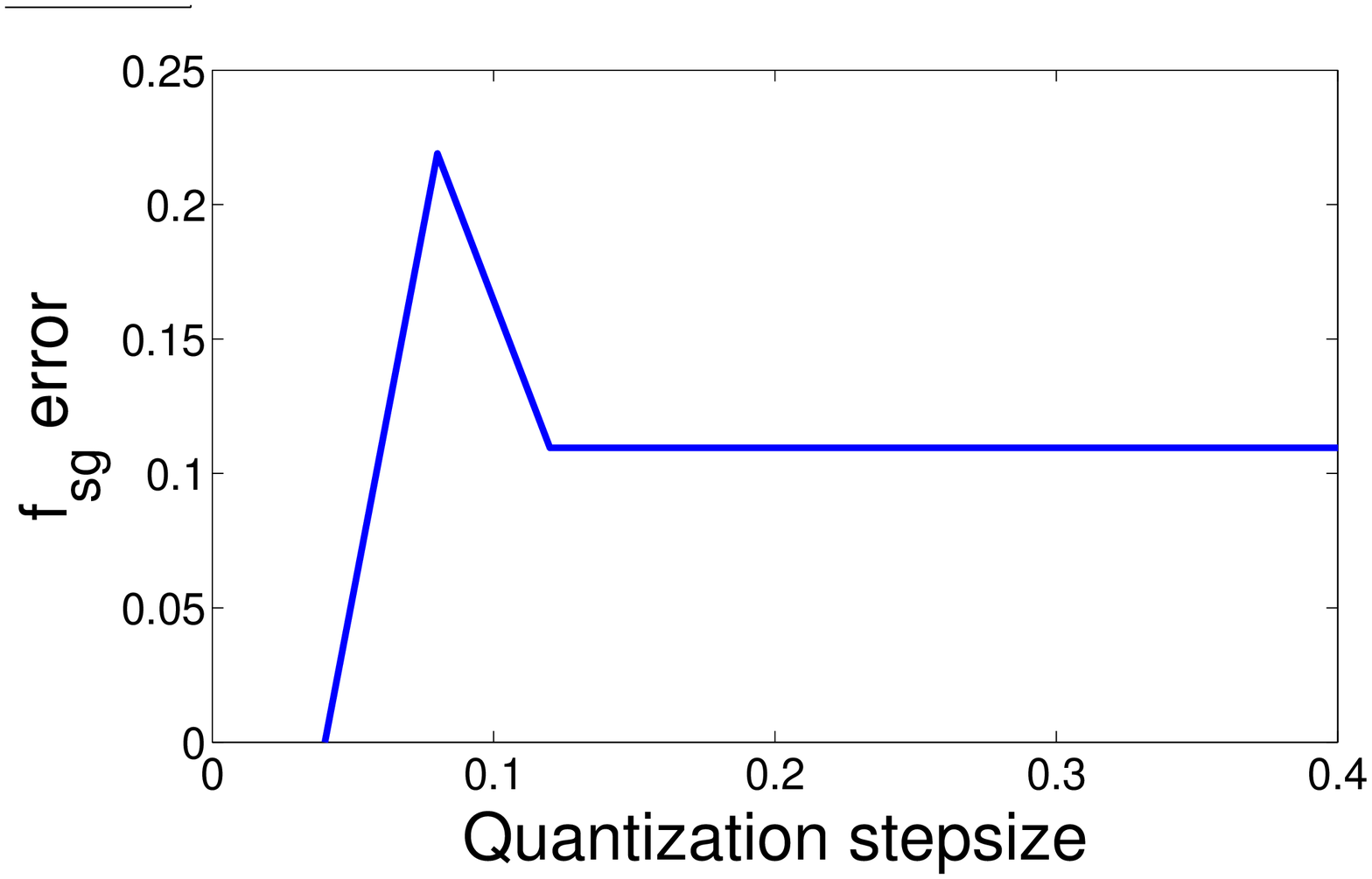}&
 \epsfig{width=1\figurewidth,file=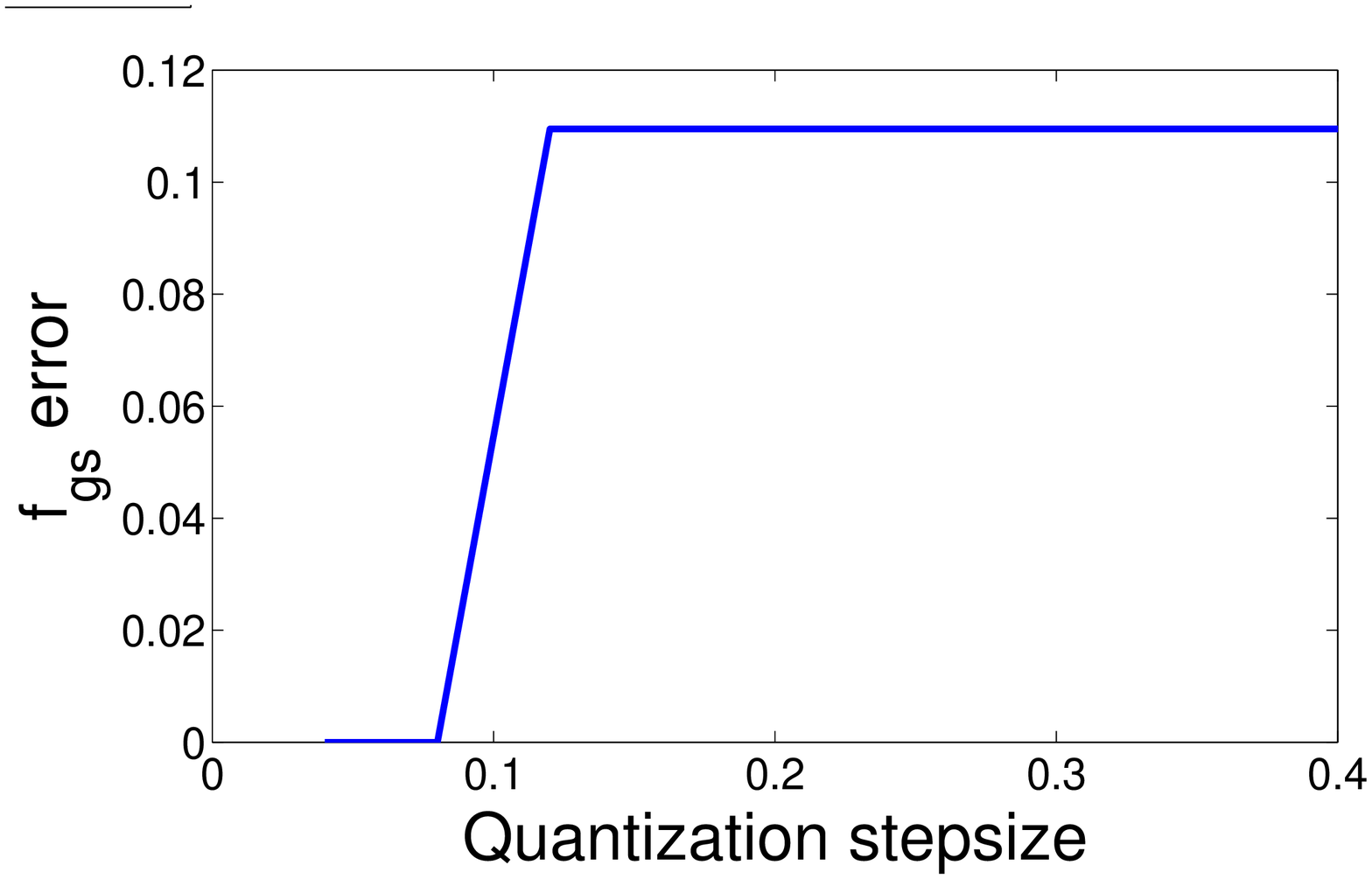}\\
        (g) Shrinkage to growth transition frequency error  & (h) Growth to shrinkage transition frequency error  \\[0.5em]
  \end{tabular}
\vspace*{-0.1in}
\caption{\label{fig:err}
Error estimation for MT parameters.}
\end{figure*}
%
%
\begin{figure*}[tp] \small
\centering
\begin{tabular}{cc}
\epsfig{width=0.95\figurewidth,file=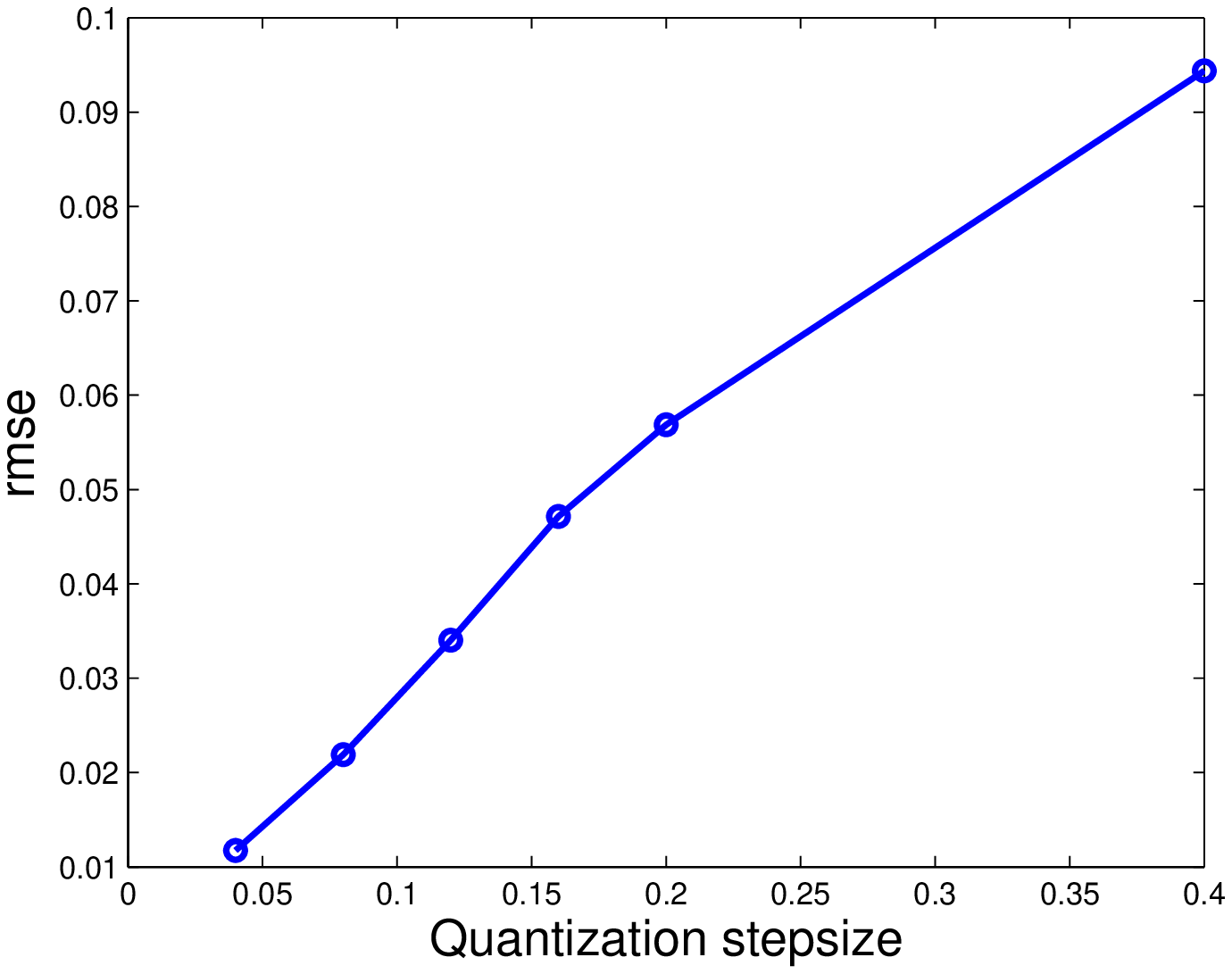}&
   \epsfig{width=0.95\figurewidth,file=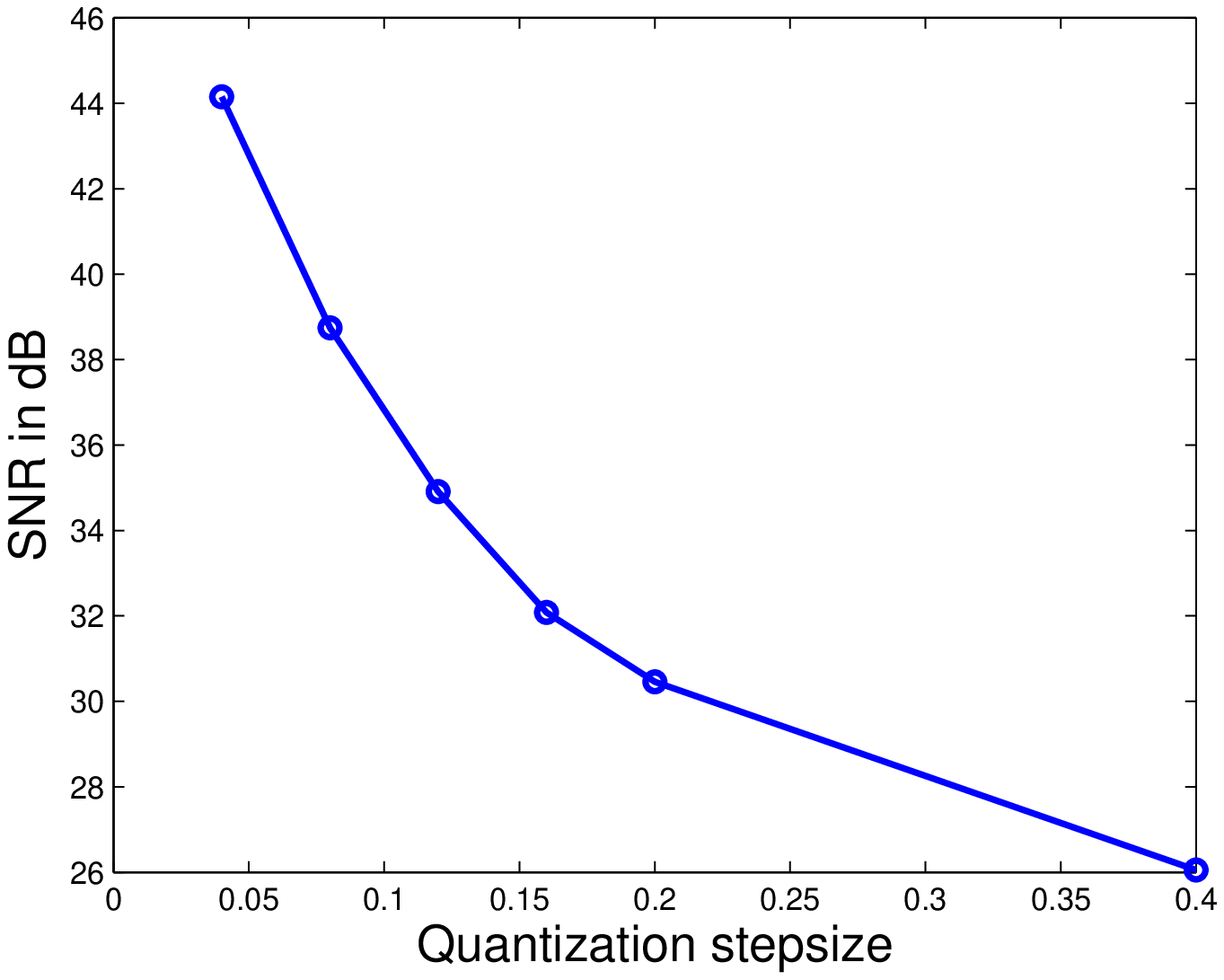}\\
      (a)RMSE of the reconstructed MT signal  vs. q &  (b) SNR of the reconstructed MT signal vs. q   \\[0.5em]
 \epsfig{width=0.95\figurewidth,file=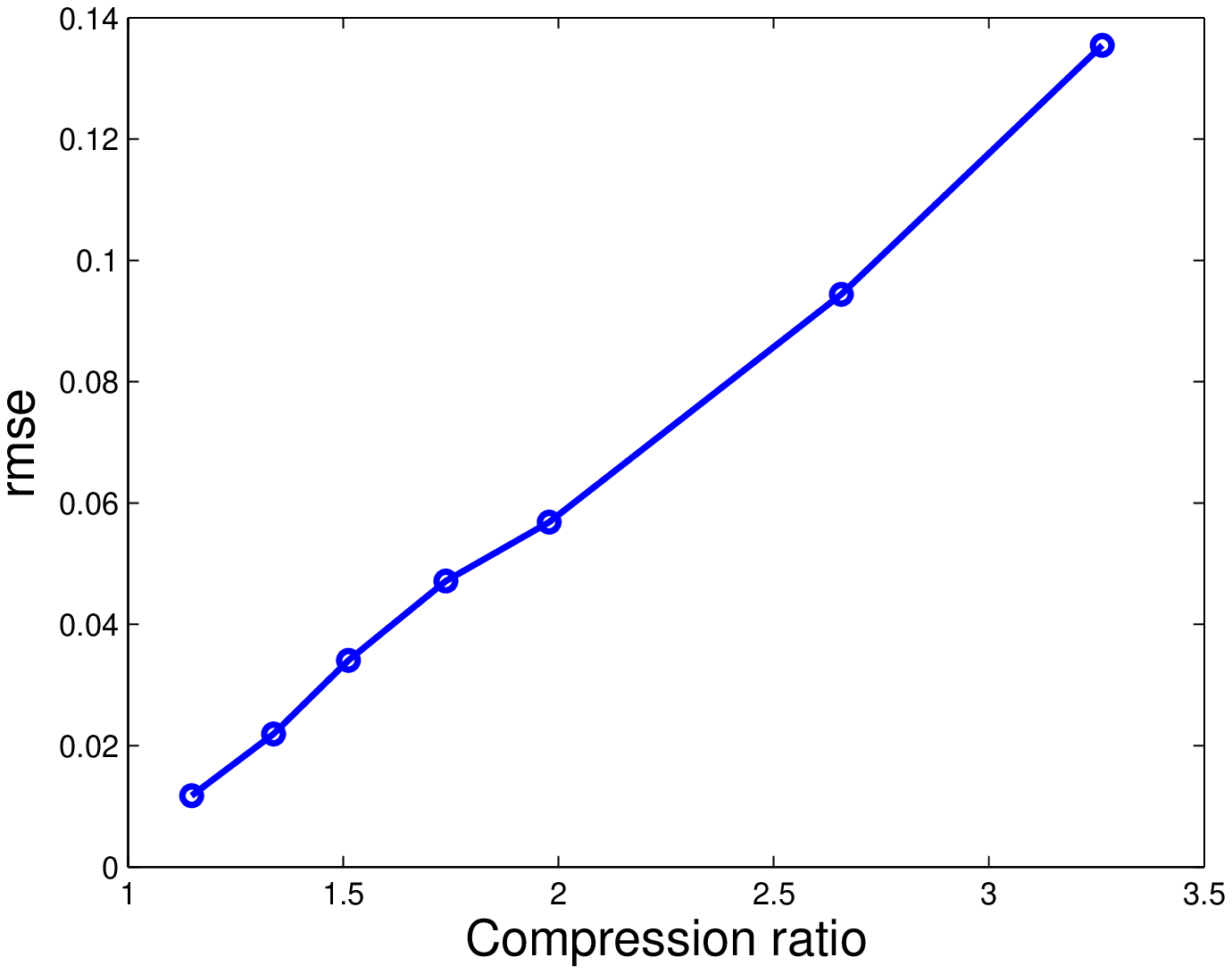}&
 \epsfig{width=0.95\figurewidth,file=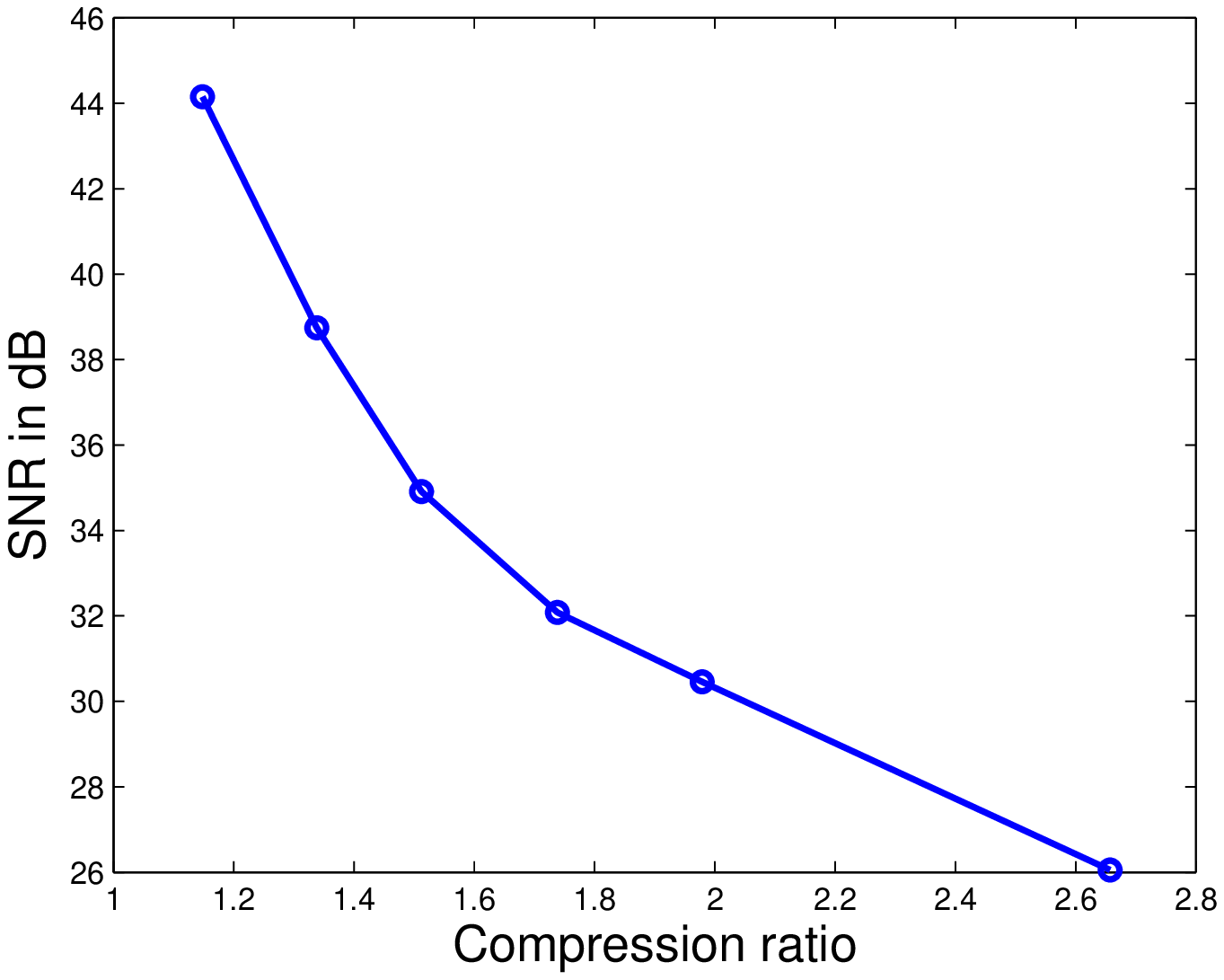}\\
        (c) RMSE of the reconstructed MT signal vs. CR  & (d) SNR of the reconstructed MT signal vs.  CR  \\[0.5em]
 
   \end{tabular}
\vspace*{-0.1in}
\end{figure*}

\begin{figure*}[tp] \small
\centering
\epsfig{width=0.95\figurewidth,file=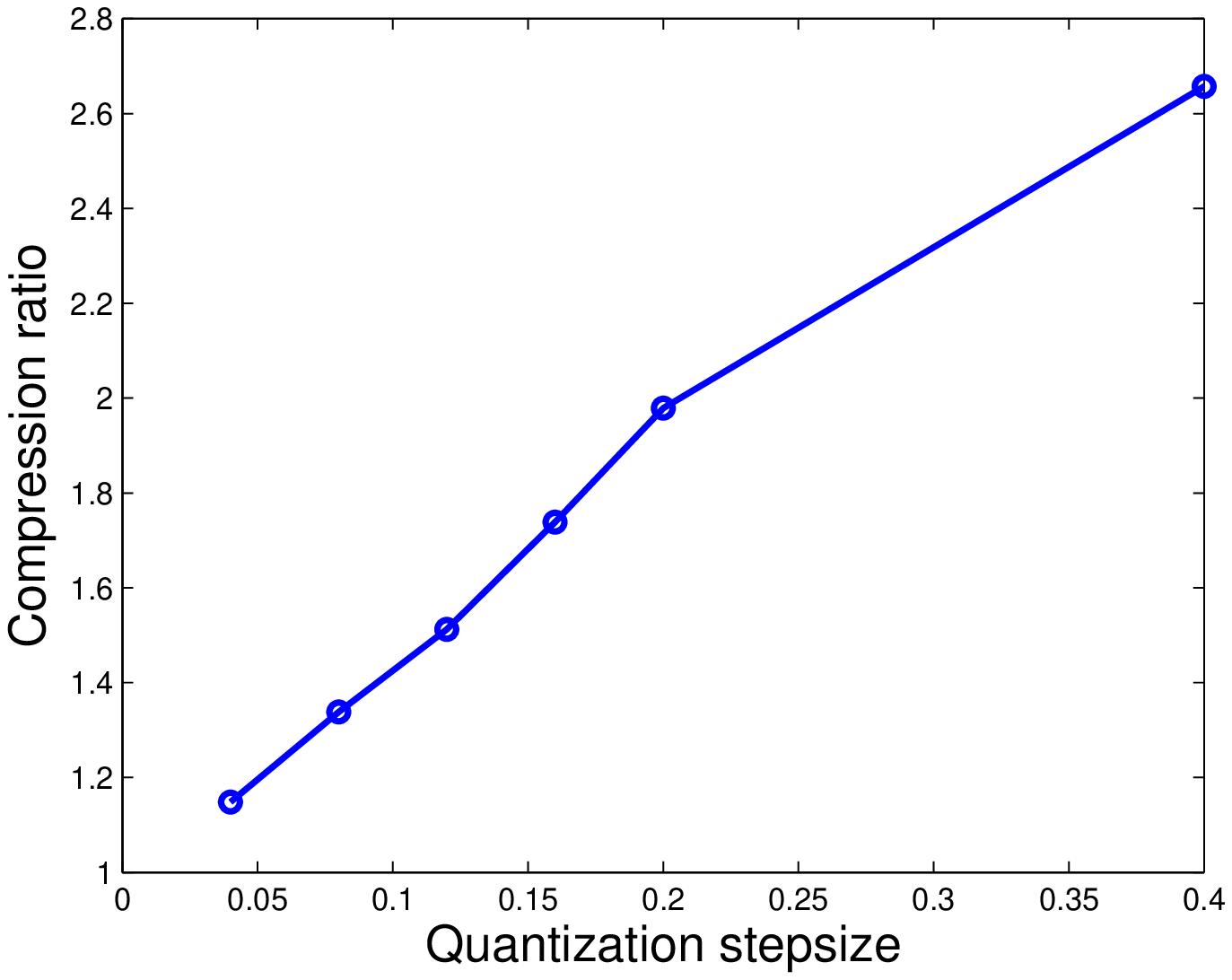}\\
        (e) Variation of CR with respect to quantization step size $q$   \\[0.5em]

\caption{Estimated MT error parameters.}
\label{fig:cr_q}
\end{figure*}

%
\begin{table}[h]
\centering
\begin{tabular}{|lc|c|c|c|c|c|c|}
\hline
\textbf{Parameter data}&\textbf{$f_{sp}$}&\textbf{$f_{ps}$}&\textbf{$f_{gp}$}&\textbf{$f_{pg}$}&\textbf{$f_{sg}$}&\textbf{$f_{gs}$}\\ \hline 
\textbf{Original ABII MT data}& 8.76&0.11&0.11& 8.54& 0.22& 0.22\\ \hline 
\textbf{$q=0.04$}& 11.94&0.11&0&5.47&0.22&0.22 \\ \hline
\textbf{$q=0.08$}& 11.17&0&0&6.57& 0&0.22 \\ \hline
\textbf{$q=0.12$}&  7.01 & 0.11& 0.11&10.51& 0.11& 0.11 \\ \hline
\textbf{$q=0.16$}& 7.01& 0.11&0.22&10.40&  0.11& 0.11 \\ \hline
\textbf{$q=0.2$}& 7.01& 0.11&0.11&10.51& 0.11& 0.11\\ \hline
\textbf{$q=0.4$}& 5.36& 0&0.11&12.26& 0.11&0.11\\ \hline
\end{tabular}
\caption{Transition frequency parameters for the original and reconstructed MT signal for varying values of quantization step size $q$.}
\label{table:param1}
\end{table}
%
\begin{table}[h]
\centering
\begin{tabular}{|lc|c|c|c|}
\hline
\textbf{Parameter data}&\textbf{$v_s$}&\textbf{$v_g$}&\textbf{avg  L}\\ \hline 
\textbf{Original ABII MT data }& 45.17&40.65&5.24  \\ \hline 
\textbf{$q=0.04$}&60.23&40.65&5.19 \\ \hline
\textbf{$q=0.08$}&81.31&36.14&7.61 \\ \hline
\textbf{$q=0.12$}& 60.23& 32.52&8.80 \\ \hline
\textbf{$q=0.16$}& 45.17&40.65&6.53 \\ \hline
\textbf{$q=0.2$}& 60.23&40.65&8.75 \\ \hline
\textbf{$q=0.4$}& 81.31&81.31&15.15 \\ \hline
\end{tabular}
\caption{Velocity  and average length parameters of the original and reconstructed MT signal for varying values of quantization step size $q$}
\label{table:param2}
\end{table}

From a biological standpoint, it is essential to estimate pertaining MT parameters such as transition frequencies, transition peaks, error parameters , and average length of MT to understand the state of the system. Fig \ref{fig:err} illustrates the error estimates of transition peaks and frequencies over different step sizes. Table \ref{table:param1} and Table \ref{table:param2}, gives us the original and estimated parameter values of $3$ state MT signal. From the above tables, we see that, although, the individual transition frequency errors are very small, even a small change in estimated frequencies can cause huge fluctuations in the computed average length $L$ of MT. Hence, it can be regarded that $3$ state MTs are sensitive to the observed parameters. To demonstrate the efficacy and overall performance of our approach we computed Root Mean Square Error (RMSE) and the  Signal to Noise Ratio (SNR) for varying step sizes $q$. The RSME and SNR are defined as follows:\\

\be
RMSE=\sqrt{\frac{1}{N}\sum_{i=1}^{N}(\hat{\bf{x}}_i-\bf{x_i})^2}
\ee

\be
SNR=10log\Bigg(\frac{\sigma^{2}{(\bf{x})}}{{\frac{1}{N}\sum \limits_{i=1}^{N}(\hat{\bf{x}}_i-\bf{x_i})^2}}\Bigg)
\ee
where, $\sigma^{2}(\bf{x})$ denotes the variance of our input MT signal $\bf{x}$ and $\hat{\bf{x}}$ represents recovered MT signal. 

From Fig \ref{fig:cr_q} , it can be inferred that as the quantization step size increases, so does the CR, RMSE and equivalently there is a gradual decay in SNR values. For quantization step sizes $> 0.6$  we see greater fluctuation in the estimated MT parameters and gradual increase in RMSE and decrease in SNR values.  Also, Fig \ref{fig:cr_q}, shows the SNR of reconstructed signal. Due to the dynamic instability of MTs we do not observe higher CRs as in case of other biomedical signals such as ECG. Nevertheless the efficacy of the proposed approach can be further verified by looking at the high SNR and low RMSE of the system. Hence, in this paper we propose an unique instance where compression techniques can be employed enabling us to reduce the number of samples needed for reconstruction of MT signal. However, the trade-off here depends on the applications for which the recovered MT signal would be further used (implying application specific SNR-RMSE compromises)  and  choosing the right quantization step size and in turn achieving desired data compression.

\section{Summary and Conclusion}
\label{sec:conc}
In this work, we proposed a wavelet based non-traditional stack run coding compression framework to reduce the number of samples required for reconstruction of MT signal. Instead of using 4 symbols for encoding data, we have used 5 symbols to incorporate peak information along with the encoding process without causing a compression overhead. The encoded signal is decoded and recovered from wavelet domain. The results prove that we can achieve considerable compression, with a tradeoff in SNR and RMSE values. The dynamic instability of MT hinders higher data compression in comparison to other biomedical signals such as ECG. It was experimentally found that even with low errors in estimated transition frequency parameters, caused greater fluctuation in the average length of MT. Thus, it can be concluded that $3$ state MT signals are sensitive to their parameters. For higher values of step size $q$, we had higher compression, but lead to higher RMSE, lower SNR and lower accuracy in peak detection because of loss of signal details. And conversely, with lower step size $q$, we had higher SNR, lower  RMSE, and higher accuracy in peak detection, because of close approximation of the signal , but lead to lower compression rate. Hence, depending on the further applications that the recovered MT signal might me used, we might have to compromise between desirable compression rate and preferred SNR and RMSE values. For future work, further investigation can be conducted by using vector quantization instead of uniform quantization and employing adaptive arithmetic coders  for compression of MT signals.

\end{document}